\def\BibTeX{{\rm B\kern-.05em{\sc i\kern-.025em b}\kern-.08em
    T\kern-.1667em\lower.7ex\hbox{E}\kern-.125emX}}
\titlespacing{\section}{0pt}{6.5pt}{6.5pt}
\titlespacing{\subsection}{0pt}{5pt}{5pt}
\begin{document}
\newtheorem{theorem}{Theorem}[section]
\newtheorem{lemma}[theorem]{\bf Lemma}
\newtheorem{proposition}[theorem]{Proposition}
\newtheorem{corollary}[theorem]{Corollary}

\newenvironment{definition}[1][Definition]{\begin{trivlist}
\item[\hskip \labelsep {\bfseries #1}]}{\end{trivlist}}
\newenvironment{example}[1][Example]{\begin{trivlist}
\item[\hskip \labelsep {\bfseries #1}]}{\end{trivlist}}
\newenvironment{remark}[1][Remark]{\begin{trivlist}
\item[\hskip \labelsep {\bfseries #1}]}{\end{trivlist}}

\newcommand{\squishlist}{
  \begin{list}{$\bullet$}
    { 
    \setlength{\itemsep}{0pt}      \setlength{\parsep}{0pt}
      \setlength{\topsep}{3pt}       \setlength{\partopsep}{0pt}
      \setlength{\listparindent}{-2pt}
      \setlength{\itemindent}{-5pt}
      \setlength{\leftmargin}{0.5em} \setlength{\labelwidth}{0em}
      \setlength{\labelsep}{0.5em} } }

\newcommand{\squishlistend}{
    \end{list}  }
      
 \newcounter{Lcount}
\newcommand{\squishlisttwo}{
\begin{list}{\arabic{Lcount}) }
{ \usecounter{Lcount}
\setlength{\itemsep}{0pt}
\setlength{\parsep}{0pt}
\setlength{\topsep}{0pt}
\setlength{\partopsep}{0pt}
\setlength{\leftmargin}{0.6 em}
\setlength{\labelwidth}{1.5em}
\setlength{\labelsep}{0.5em} } }

\newcommand{\squishlisttwoend}{
\end{list} }

 \newcommand{\squishlistthree}{
  \begin{list}{$-$}
    { 
    \setlength{\itemsep}{0pt}      \setlength{\parsep}{0pt}
      \setlength{\topsep}{3pt}       \setlength{\partopsep}{0pt}
      \setlength{\listparindent}{-2pt}
      \setlength{\itemindent}{-3pt}
      \setlength{\leftmargin}{0.5em} \setlength{\labelwidth}{0em}
      \setlength{\labelsep}{0.5em} } }

\newcommand{\squishlistthreeend}{
    \end{list} }

 \newcommand{\squishlistthreeone}{
  \begin{list}{$-$}
    { 
    \setlength{\itemsep}{0pt}      \setlength{\parsep}{0pt}
      \setlength{\topsep}{3pt}       \setlength{\partopsep}{0pt}
      \setlength{\listparindent}{-4pt}
      \setlength{\itemindent}{-7pt}
      \setlength{\leftmargin}{0.0em} \setlength{\labelwidth}{0em}
      \setlength{\labelsep}{0.5em} } }

\newcommand{\squishlistthreeoneend}{
    \end{list} }
    
\newcounter{Lcountfour}
\newcommand{\squishlistfour}{
\begin{list}{\alph{Lcountfour}) }
{ \usecounter{Lcountfour}
\setlength{\itemsep}{0pt}
\setlength{\parsep}{0pt}
\setlength{\topsep}{0pt}
\setlength{\partopsep}{0pt}
\setlength{\leftmargin}{0.8 em}
\setlength{\labelwidth}{1.5em}
\setlength{\labelsep}{0.5em} } }

\newcommand{\squishlistfourend}{
\end{list} }

\newcounter{Lcountfive}
\newcommand{\squishlistfive}{
\begin{list}{\arabic{Lcountfive}) }
{ \usecounter{Lcountfive}
\setlength{\itemsep}{0pt}
\setlength{\parsep}{0pt}
\setlength{\topsep}{0pt}
\setlength{\partopsep}{0pt}
\setlength{\leftmargin}{1.2 em}
\setlength{\labelwidth}{1.5em}
\setlength{\labelsep}{0.5em} } }

\newcommand{\squishlistfiveend}{
\end{list} }

\newcounter{Lcountsix}
\newcommand{\squishlistsix}{
\begin{list}{\roman{Lcountsix}. }
{ \usecounter{Lcountsix}
\setlength{\itemsep}{0pt}
\setlength{\parsep}{0pt}
\setlength{\topsep}{0pt}
\setlength{\partopsep}{0pt}
\setlength{\leftmargin}{1.2 em}
\setlength{\labelwidth}{1.5em}
\setlength{\labelsep}{0.5em} } }

\newcommand{\squishlistsixend}{
\end{list} }

\newcommand{\squishlistoneone}{
   \begin{list}{$\bullet$}
    { \setlength{\itemsep}{0pt}      \setlength{\parsep}{0pt}
      \setlength{\topsep}{3pt}       \setlength{\partopsep}{0pt}
      \setlength{\listparindent}{-2pt}
      \setlength{\itemindent}{-5pt}
      \setlength{\leftmargin}{0.7em} \setlength{\labelwidth}{0em}
      \setlength{\labelsep}{0.3em} } }

\newcommand{\squishlistoneoneend}{
    \end{list}  }
    
\newcounter{Lcounttwotwo}
\newcommand{\squishlisttwotwo}{
\begin{list}{\arabic{Lcounttwotwo}) }
{ \usecounter{Lcounttwotwo}
\setlength{\itemsep}{0pt}
\setlength{\parsep}{0pt}
\setlength{\topsep}{0pt}
\setlength{\partopsep}{0pt}
\setlength{\leftmargin}{1.8 em}
\setlength{\labelwidth}{1.5em}
\setlength{\labelsep}{0.3em} } }

\newcommand{\squishlisttwotwoend}{
\end{list} }

\newcommand\blfootnote[1]{%
  \begingroup
  \renewcommand\thefootnote{}\footnote{#1}%
  \addtocounter{footnote}{-1}%
  \endgroup
}

\date{}

\title{Argus: A Fully Transparent Incentive System for Anti-Piracy Campaigns (Extended Version)}

\author{
    \IEEEauthorblockN{Xian Zhang\IEEEauthorrefmark{1}, Xiaobing Guo\IEEEauthorrefmark{2}, Zixuan Zeng\IEEEauthorrefmark{3}, Wenyan Liu\IEEEauthorrefmark{4}, Zhongxin Guo\IEEEauthorrefmark{1}, Yang Chen\IEEEauthorrefmark{1}, Shuo Chen\IEEEauthorrefmark{1},\\ Qiufeng Yin\IEEEauthorrefmark{1}, Mao Yang\IEEEauthorrefmark{1}, Lidong Zhou\IEEEauthorrefmark{1}}
    \\
    \IEEEauthorblockA{\IEEEauthorrefmark{1}Microsoft Research Asia \hspace{10pt} \IEEEauthorrefmark{2}Alibaba Group \hspace{10pt}\IEEEauthorrefmark{3}Carnegie Mellon University	 \hspace{10pt}\IEEEauthorrefmark{4}East China Normal University
    \\\{zhxian, zhogu, yachen, shuochen, qfyin, maoyang, lidongz\}@microsoft.com
    \\xiaobing.gxb@alibaba-inc.com \hspace{25pt}zixuanze@andrew.cmu.edu \hspace{25pt}wyliu@stu.ecnu.edu.cn}
}

\maketitle

\begin{abstract}
Anti-piracy is fundamentally a procedure that relies on collecting data from the open anonymous population, so how to incentivize credible reporting is a question at the center of the problem. Industrial alliances and companies are running anti-piracy incentive campaigns, but their effectiveness is publicly questioned due to the lack of transparency. We believe that full transparency of a campaign is necessary to truly incentivize people. It means that every role, e.g., content owner, licensee of the content, or every person in the open population, can understand the mechanism and be assured about its execution without trusting any single role. 

We see this as a distributed system problem. In this paper, we present Argus, a fully transparent incentive system for anti-piracy campaigns. The groundwork of Argus is to formulate the objectives for fully transparent incentive mechanisms, which securely and comprehensively consolidate the different interests of all roles. These objectives form the core of the Argus design, highlighted by our innovations about a Sybil-proof incentive function, a commit-and-reveal scheme, and an oblivious transfer scheme. In the implementation, we overcome a set of unavoidable obstacles to ensure security despite full transparency. Moreover, we effectively optimize several cryptographic operations so that the cost for a piracy reporting is reduced to an equivalent cost of sending about 14 ETH-transfer transactions to run on the public Ethereum network, which would otherwise correspond to thousands of transactions. With the security and practicality of Argus, we hope real-world anti-piracy campaigns will be truly effective by shifting to a fully transparent incentive mechanism.  
\end{abstract}

\blfootnote{\IEEEauthorrefmark{2}\IEEEauthorrefmark{3}\IEEEauthorrefmark{4}Work done during employment (Xiaobing Guo) and internship (Zixuan Zeng and Wenyan Liu) at Microsoft.}
\setlength{\baselineskip}{12pt}

\section{Introduction}\label{sec:intro}

Intellectual property is one of the most valuable assets for present-day companies, especially in the software, movie, gaming and digital publishing industries. Anti-piracy is a long-lasting and heavily invested effort, because piracy impacts the fundamental business models of these industries. Anti-piracy has the legal aspect and the technological aspect. The former crucially depends on the latter to collect credible and undeniable evidences, so that appropriate legal actions can be taken against the infringers. For example, if evidences prove that the number or the retail value of the pirated copies exceed a certain threshold during any 180-day period, according to the Title 17 of the United States Code, infringers shall be imprisoned for a maximum of 5 years, or fined a maximum \$250,000, or both \cite{title17}.

Since piracy is fundamentally about disseminating copyrighted contents outside legitimate distribution channels, a central question about anti-piracy is how to incentivize people in the open population to report pirated copies. Industrial alliances (BSA \cite{bsa}, FACT\cite{fact}, SIIA\cite{siia}) and companies (e.g., Custos \cite{custos}, Veredictum \cite{veredictum}) have offered big amounts of bounties for piracy reporting. For example, the Business Software Alliance (i.e., BSA \cite{bsa}), whose members include Apple, IBM, Microsoft, Symantec and many others, posted a \$1-million bounty for reporting. However, the approach is not yet effective and is questioned/criticized by the public, mainly due to the lack of transparency \cite{bsanews}. For example, it is unclear whether the \$1-million total bounty is simply a marketing gimmick, as BSA had only rewarded a small fraction of it in a long period of time. Also, BSA's neutrality is really problematic, as its members are copyright holders, who may not represent the best interest of the informers in the public. Moreover, it is unclear how BSA evaluates the credibility of the piracy reports or whether they are strong enough against an infringer's repudiation. Obviously, opaqueness about incentive, fairness, and credibility-criteria seriously limit the effectiveness of these anti-piracy campaigns.

We envision that a methodological progress for anti-piracy campaigns can be made once they are formulated as a decentralized computing problem. It is promising to advance the status quo by distributed-system technologies, especially those about incentive model (NF-Crowd\cite{nf}, Arbitrum \cite{arbitrum}, Hydra \cite{hydra}), consensus mechanism (PBFT-Hyperledger \cite{hyperledgerpbft}, Algorand \cite{algorand}), secure messaging (Decentralized  release \cite{release}, Hyperpubsub \cite{hyperpubsub}, Zerocash \cite{zerocash,zkpcommit}) and Sybil resistance (Arbitrum \cite{arbitrum}, TrueBit \cite{truebit}). Toward the vision, we have built a concrete system named Argus\footnote{Argus (an abbreviation for Argus Panoptes) is a many-eyed giant in Greek mythology, which comprehensively traces misbehaviors.}. The design is based on a clear problem statement and a set of properties as objectives, which are explained next.

\vspace{5pt}
\noindent {\bf Problem statement}. We describe the anti-piracy problem using the following terminology. An {\em owner} is the one who owns the copyrighted content (e.g., a film maker). The content is distributed through a controlled channel to a set of {\em licensees} (e.g., cinemas and film critics). Some licensees may leak their copies of the content, which leads to many pirated copies in the open population. These licensees are called the {\em infringers}. The anti-piracy system's goal is to incentivize people in the open population to report the pirated copies to the system. We refer to these people as the {\em informers}. 

The challenge in the anti-piracy problem is that the interests of these roles are different or even conflicting. Specifically, the owner's goal is to identify infringers and assess the severity of the infringement. The owner wishes that, by giving a financial incentive (e.g., a bounty) to the open population, as many good-faith reports as possible can be received. However, the motivation of informers is not always aligned with the owner. It is only reasonable to assume that informers are financially motivated \cite{arbitrum,nf}, like black-hat security researchers motivated by bug bounties \cite{hydra}. Not surprisingly, an infringer's best interest is to refute the credibility of an evidence by arguing that it could be fabricated by an informer or the owner. 

Because of the conflict of interests, an unbiased solution would require a contract that was agreed upon by all these roles. But who should be the executor of the contract? One possibility is to introduce into the problem an “executor” role, as in the BSA situation described earlier, but the role is really undesirable because its neutrality is hard to be assured in reality (e.g., even big companies like Facebook and Google had controversial practices that put their neutrality in doubt \cite{neutral}). Our work is to explore the feasibility of an open contract of which the execution is fully transparent to the public, without an additional role as the trust basis. 

\vspace{5pt}
\noindent {\bf The Argus system}. In this paper, we present the design, implementation and evaluation of the Argus system. To the best of our knowledge, it is the first public anti-piracy system which (1) does not hinge on any ``trusted" role; (2) treats every participant fairly (in particular, it is resilient to greed and abuse, and resolves conclusively every foreseeable conflict); and (3) is efficient and economically practical to run on a public blockchain (e.g. it achieves an impressive off-chain throughput of 82.6 data-trades per second per machine, and incurs only a negligible on-chain cost  equivalent to sending 14 ETH-transfer transactions per report on the public Ethereum blockchain). 

The four pillars in the Argus design are \textit{full transparency}, \textit{incentive}, \textit{information hiding} and \textit{optimization}. These are the main focuses to be elaborated on in this paper. It is worth noting that they are not four problems to be solved individually, but integral aspects in one coherent design. We highlight some properties of Argus, which represent some of our core innovations:

\textit{Incentivizing good-faith informers}. A fundamental challenge is about the interest of informers, who are anonymous people in the open population. The owner's interest is to collect good-faith reports so that the severity of the infringement can be accurately estimated. However, each individual informer's interest is to maximize his own reward. What prevents an informer from creating multiple identities to make multiple reports, so that he gets multiple rewards but causes the owner's estimation to be inflated? Note that an attack using multiple forged identities is often referred to as the Sybil attack \cite{arbitrum,truebit}. In Argus, the incentive model ensures that the total reward of the informer and all his Sybils is less than the reward he would get without forging the Sybils. In other words, our model disincentivizes Sybil attacks, so the informers' interest is aligned with the owner's. In addition, our model is superior to previous models because of several other properties for better incentives (Section \ref{sec:quantify}).

\textit{Information hiding for report submission}. Because Argus runs on a public ledger, its execution is fully transparent to everybody \cite{zerocash,zkpcommit}. It is crucial that an informer is unable to resubmit any report previously submitted by somebody else. For this reason, Argus' report submission protocol is based on Multi-period Commitment Scheme, which gives a ``zero-knowledge'' style guarantee, i.e., a submission only proves that the informer has a copy of the content without disclosing other information. Compared to traditional commitment schemes, our scheme leaks no useful information even in the reveal phase while avoiding the heavy cost of zero-knowledge proof (Section \ref{sec:zkw}). 


\textit{Strong accusation against infringer}. An owner's accusation against a licensee is always subject to an inherent paradox – since both have the leaked copy, why can't the licensee refute the accusation by arguing that the infringer may be the owner himself? We believe that the only solution to circumvent this paradox is to resort to a probabilistic argument. Argus uses Oblivious Transfer~(OT) to ensure that the false accusation is bounded by a probability $\phi$, which can be arbitrarily small. Hence, the accusation is very hard to refute. Moreover, we improve the efficiency of leveraging OT, which carefully considers the scalability limitation of distributed ledgers \cite{hyperledgerpbft,algorand} (Section \ref{sec:detect}).

\vspace{5pt}
\noindent {\bf Contributions.} Our contributions are as follows:
\squishlistoneone
\item We formulate anti-piracy as a problem about consolidating different interests of multiple roles, including informers in the open population. We also clearly state the design objectives of anti-piracy solutions, which give a foundation for this work and future research.
\item The Argus contract is fully transparent --- no role is considered as the trust base. This is a significant advancement. In addition, our approach is systematic. Because of the clarity on the design objectives, we are able to deduce the general form of the incentive model and identify all the unavoidable technical challenges. Because these challenges are general, solving them in Argus will have a far-reaching impact in the broad problem space. 
\item To achieve full transparency and a number of novel objectives, Argus needs to implement sophisticated cryptographic operations as contract code, rather than native code. Optimizations are a vital effort in the design of Argus. We show that, if existing cryptographic operations were adopted without optimization, the cost would equal sending thousands of transactions (as opposed to 11 transactions in Argus), which would make the solution economically unreasonable. 
\squishlistoneoneend

\noindent {\bf Roadmap}. The rest of the paper is organized as follows. Section \ref{sec:basicargus} gives an overview of Argus to address how we leverage a transparent contract to achieve the mutual trust and fairness between owner, licensees and informers, along with related primitives. In Section \ref{sec:quantify}, Section \ref{sec:zkw} and Section \ref{sec:detect}, we further figure out that those primitives should be optimized to achieve four pillars with better practicality. Then, we brief our implementation with optimizations in Section \ref{sec:construct}. In Section \ref{sec:eval}, the security analysis and performance evaluation of Argus are provided, followed by related work in Section \ref{sec:relate} and a conclusion. The Appendix provides details of mathematical deduction, protocol implementations and security analysis.

\section{Overview of Argus} \label{sec:basicargus}
In this section, we give an overview of the anti-piracy solution and Argus contract. We assume familiarity with contract \cite{solidity} and cryptographic primitives such as oblivious transfer \cite{np}, zero-knowledge proof \cite{zksnark} and commitment scheme \cite{commitment}. 

\subsection{The Argus Contract} 
Figure \ref{fig:overview} illustrates the important elements in the Argus contract and how different roles interact with it. 
An instance of the Argus contract is created by the owner of a copyrighted content. We denote $M$ as the number of licensees and list the data fields used in the Argus system in Table \ref{tab:data}.
\begin{table}[b]
\vspace{-15pt}
\footnotesize
\centering
\caption{Data fields used in the Argus system}
\setlength{\tabcolsep}{1pt}
 \renewcommand{\arraystretch}{1.1}
\begin{tabular}{|l|l|}
\hline
Data fields           & Description                                                                                                                                                              \\ \hline
\textrm{LicenseeStatus[$M$]} & \begin{tabular}[c]{@{}l@{}}To store the status of every licensee. Each \\ status belongs to set \{NORMAL, ACCUSED, \\ GUILTY, EXONERATED\}\end{tabular} \\ \hline
\textrm{OTEList[$M$]}        & \begin{tabular}[c]{@{}l@{}}To store the OTEvidence of every OT between\\ owner and Licensee\end{tabular}                                                            \\ \hline
\textrm{ReportNumber[$M$] }  & \begin{tabular}[c]{@{}l@{}}To indicate the number of leaked copies from\\ every licensee by counting informers' reports\end{tabular}                                      \\ \hline
\end{tabular} \label{tab:data}
\end{table}

\begin{figure}[t]
    \centering
    \includegraphics[width=0.43\textwidth]{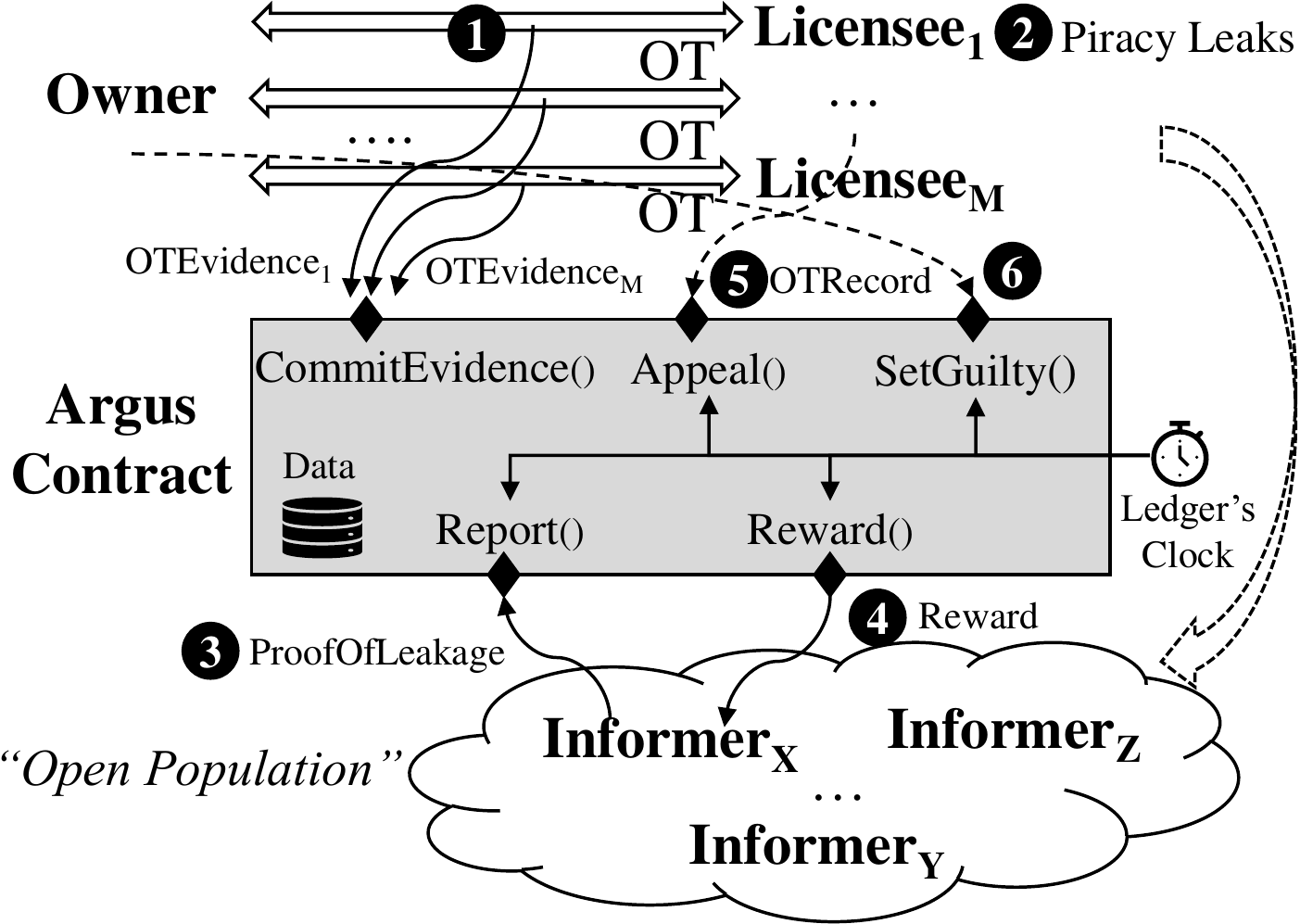}
    \vspace{-5pt}
    \caption{Overview of the Argus system. }
    \label{fig:overview}
    \vspace{-20pt}
\end{figure}

When the owner distributes the content to the licensees, she generates a large number of watermarked copies, e.g., $10000*M$ copies. In other words, each copy is embedded with a unique secret string. In this work, we assume that nobody can remove the watermark from a copy without badly deteriorating the content quality. Despite this assumption, we will describe in Section \ref{sec:model} our insights about improving the robustness of watermark.

A licensee retrieves a copy from the owner through oblivious transfer (OT), which ensures: (1) the owner does not know which of the watermarked copies is retrieved; (2) the licensee knows nothing about other copies except the retrieved one. The OT procedure is performed by private computations on the owner side and the licensee side. One of our important features added to the existing OT protocol is to produce a data record called {\em OTRecord} on the licensee side, which is needed in case the licensee appeals against an accusation. In addition, the OT scheme produces another piece of data called {\em OTEvidence}, which opaquely represents the existence of the current OT procedure. Every OTEvidence is submitted to the contract via function \texttt{CommitEvidence(OTEvidence)}. The set \texttt{OTEList[M]} defined in the contract stores all OTEvidences. 

Suppose a copy of the content is leaked out into the open population. In Figure \ref{fig:overview}, we assume that informer$_X$ gets the copy and wants to report. The informer should first extract the secret string from the watermarked copy. The reporting is to fulfill an information-hiding procedure to show informer's acquaintance of the secret string, which is via function \texttt{Report(ProofOfLeakage)} of the contract and increases an element of \texttt{ReportNumber[M]}.  
Function \texttt{Report()} can be either implemented via commitment scheme or zero knowledge proof.
Because \texttt{Report()} is an information-hiding procedure, other people in the population cannot learn anything about the watermarked copy reported by informer$_X$, thus cannot report about the same copy unless they actually have it.
Nevertheless, the information-hiding submission does not prevent informer$_X$ from creating multiple Sybil identities to submit multiple reports about one watermarked copy. Function \texttt{Reward()} of the Argus contract implements an incentive model that causes the total reward obtained through the Sybil attack to be less than what the informer would normally receive. We will explain the incentive model in Section \ref{sec:quantify}. Function \texttt{Reward()} is periodically invoked according to the ledger's clock so that informers can get timely rewards.

Once a report is received by the contract, the reported copy will reveal which licensee is accused and update\texttt{ LicenseeStatus[M]}. The status of the licensee is changed from NORMAL to ACCUSED. The licensee has a time period for appeal. If he does not call function \texttt{Appeal(OTRecord)} within the period, function \texttt{SetGuilty()}, also invoked by the owner according to the ledger's clock, changes the licensee's status to GUILTY. If he calls \texttt{appeal(OTRecord)} within the period, the OTRecord will reveal which watermarked copy the licensee received via OT. If it is the same copy revealed by the report, the licensee's status is changed to GUILTY. Otherwise, it is changed to EXONERATED. Once exonerated, this licensee is exonerated for this copyrighted content, because the identity of his copy has been disclosed. 

There are different ways to implement the Argus contract, but every solution should include three elements: (1) an {\bf\em incentive model}, (2) an {\bf\em information-hiding submission scheme}, and (3) an {\bf\em OT scheme}. Before presenting these elements in following sections, we introduce the trust assumptions of Argus below.

\subsection{Trust Assumptions} \label{sec:model}
We make three trust assumptions for the design and evaluation of Argus.
\squishlistoneone
\item{\bf Robust watermarking.} We assume watermarking cannot be compromised without considerable degradation of the content's value. Existing work has shown success of watermark robustness in certain domains. For example, the image watermark can protect against attacks such as splitting, sampling, filtering, image compression \cite{image}. Although novel attacks such as oracle attack \cite{sensitivity} and collusion attack \cite{bscode} emerge, watermark techniques keep evolving with new countermeasures \cite{sensitivity2,tardos}. We see this as a separate research concern.
\item{\bf Financially motivated informers.} We assume that a financial reward is the only motivation for every informer \cite{arbitrum, truebit} . This implies that, if an action results in a loss of reward, no informer will take the action. 
\item{\bf Trusted blockchain.} We assume the trustworthiness of the blockchain. There are technologies to enhance security at the ledger layer \cite{pow,pos} and the smart contract layer \cite{teether,oyente,zeus,paymentprotect}. We also see this as a topic complementary to Argus.
\squishlistoneoneend

\section{Incentive Model}\label{sec:quantify}
As mentioned in the introduction, an incentive-compatible mechanism is required in Argus. It is a challenge because of the open population --- the owner does not know the real-life identities of informers. The interests of the owner and the informers are different: the owner wants to receive good-faith reports, and the informers (both honest ones and greedy ones) want to get financial rewards. The goal of the incentive model is to consolidate these different interests.
The design of our incentive model is inspired by some models in the literature \cite{arbitrum,truebit}. However, because we have the target problem clearly formulated (as in the introduction), we are able to deduce a general form of the incentive model. Our approach is superior for three reasons: (1) it is unclear how the incentive models in the literature were obtained. They seem more of a ``creative art'' than a result of a disciplined design; (2) our general form encompasses all models that satisfy the incentive objectives of all roles. The existing models in the literature are simplified special cases of our general form; (3) our model ensures extra desirable properties, such as {\em timely payout} and {\em guaranteed amount}, which we see as very important aspects of informers' incentive.
\subsection {The Objectives of Incentive Models} \label{sec:modelobj}
Before introducing our general-form model, it is worth making explicit the objectives of a desirable incentive model.
\squishlistoneone
\item First, the model must {\em disincentivize Sybil attacks}. It is easy to understand that, due to the open population, any informer can create multiple identities (i.e., Sybils), so it is not possible for the owner to detect Sybil attacks. Hence, an objective of the incentive model is to disincentivize them, so that the total reward of all Sybils of a duplicate report is lower than the reward of a single unique report. With this property, an informer (who is assumed to be financially motivated) will not inflate the owner's counts in \texttt{ReportNumber[]}.
\item Second, it is the owner's interest to {\em incentivize timely reports}, so an informer reporting earlier should be rewarded more than another one reporting later. Essentially, under the incentive model, informers are competitors racing against time. Nothing can be gained by delaying a report.
\item Third, it is the informers' interest to {\em get timely payouts} and {\em guaranteed amounts}. In all existing models, informers need to wait until the end of the campaign to know the amounts and get the rewards. We believe that a good incentive should reward an informer shortly after the report is confirmed valid. The time to reward should be independent of the campaign's duration. 
 \squishlistoneoneend

\subsection{Deducing the Reward Function from the Objectives}
We formulate the incentive model as a reward function $B(I_i,n)$, which denotes the bounty value for the $i$-th successful informer $I_i$ (informers are chronologically ordered) when the total number of informers is $n$. Previous proposals have given a special form of $B(I_i,n)$ as $c*2^{-n+1}$ ($c$ is a constant denoting the total bounty value) under certain constraints \cite{arbitrum}, but they have not formalized the properties of $B(I_i,n)$ {\em in a general form}. 
For example, the work \cite{arbitrum} only gives the special form with the assumption that $B(I_i,n)=B(I_j,n) (i \neq j)$, i.e., every informer's reward is equal. 

In this paper, we go through the process of deducing the reward function from the objectives. The mathematical definitions of the properties corresponding to aforementioned objectives are as follows:
\squishlistoneone
\item {\bf Sybil-proofness.} Arbitrary subset of informers get a reduced total reward if the total number of submissions increases: $ \sum_{i \in S_m} B(I_i,m) \ge \sum_{i \in S_k} B(I_i,k) $,  where $m \le k$ are two positive integers and $S_m \subseteq S_k$ are two arbitrary subsets of $\set{1,\ldots,m}$ and $\set{1,\ldots,k}$, respectively. 
\item {\bf Order-awareness.} The earlier the informer reports, the more bounty he/she obtains: $B(I_i,n) \ge B(I_{i+1},n)$.
\item {\bf Timely payout.} Each informer is rewarded in an amortized style: $B(I_i, n) = B_1(i)+B_2(n)$, where $B_1(i)$ is only related to $i$, so the reward can be paid immediately.
\item {\bf Guaranteed amount.} An informer, upon a confirmed reporting, is guaranteed a minimal reward amount $c_{i}>0$, i.e., his/her total reward will not be under $c_{i}$ as the number of informers increases: $\lim_{n \rightarrow \infty} B(I_i, n) \ge c_{i}$.
\squishlistoneoneend

Note that except for the first property, the others are not achieved by reward function $B(I_i,n) = c*2^{-n+1}$ from previous work. We ascribe the limitations of previous incentive models to the lack of a general-form deduction, which is addressed in our work. Due to the page limit, we move the mathematical deduction for the general form and the process of enriching properties to Appendix \ref{sec:math}. We only highlight the final expression of $B(I_i,n)$, which can be formalized as Theorem \ref{th:expr}:

\begin{theorem}[\bf The expression of $B(I_i,n)$] \label {th:expr}
To satisfy Sybil-proofness, Order-awareness, Timely Payout and Guaranteed Amount, $B(I_i,n)$ should satisfy following equation:

\begin{equation} \label{eq:gf}
 B(I_i,n) = -\xi_i+\sum_{j=i+1}^n \xi_j+c*2^{-n+1}
 \vspace{-15pt}
\end{equation}
{\small
 \begin{eqnarray} \label{eq:p1}
 \nonumber \textrm{\normalsize where }
  \xi_{i+1}=\sum_{j=1}^{i} 2^{j-i}*\Delta_j\  ,\quad
  \sum_{j=1}^{\infty} 2^{j}*\Delta_j \le c ,\quad
  \set{\Delta_i} \in  \mathbb{R}^{*}_{\ge 0}
 \end{eqnarray}
 }
 \end{theorem}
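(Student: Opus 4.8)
The plan is to \emph{derive} the expression from the four objectives rather than guess-and-verify it, introducing the objectives one at a time so that each successively cuts down the space of admissible functions until only the two-parameter family (the constant $c$ together with a sequence $\{\Delta_i\}$) remains. First I would invoke \textbf{Timely payout} to write $B(I_i,n)=\beta_i+\gamma_n$ with $\beta_i:=B_1(i)$, $\gamma_n:=B_2(n)$, reducing the problem to determining the two single-index sequences $(\beta_i)_{i\ge1}$ and $(\gamma_n)_{n\ge1}$ (and noting the harmless additive gauge $\beta_i\mapsto\beta_i+a$, $\gamma_n\mapsto\gamma_n-a$, to be fixed later by normalization). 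Next, \textbf{Order-awareness} is exactly the statement that $(\beta_i)$ is non-increasing, so I would \emph{define} $\Delta_i:=\tfrac12(\beta_i-\beta_{i+1})\ge0$ --- these become the free parameters $\{\Delta_i\}\in\mathbb{R}^{*}_{\ge 0}$ of the theorem --- and record the telescoped identity $\beta_j=\beta_1-2\sum_{l=1}^{j-1}\Delta_l$.

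The heart of the argument is \textbf{Sybil-proofness}, which I would specialize to its binding instances. Two families suffice: (i) taking $S_m=S_k$ nonempty and $k=m+1$, the inequality collapses to $\gamma_m\ge\gamma_{m+1}$, so $(\gamma_n)$ is non-increasing; (ii) taking $m=n$, $S_m=\{n\}$, $k=n+1$, $S_k=\{n,n+1\}$ --- the scenario ``the last informer fabricates one Sybil'' --- the $\beta_n$-terms cancel and it collapses to $\gamma_n\ge\beta_{n+1}+2\gamma_{n+1}$. Since handing out strictly less than a constraint permits can only weaken the incentive, the objective-satisfying functions are exactly those meeting (ii) with equality (this is what the appendix's ``enriching'' step makes precise), which leaves the linear recurrence $\gamma_{n+1}=\tfrac12(\gamma_n-\beta_{n+1})$. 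Solving it under the normalization $c:=B(I_1,1)=\beta_1+\gamma_1$ gives $\gamma_n=2^{-(n-1)}(c-\beta_1)-\sum_{j=2}^{n}2^{j-n-1}\beta_j$; substituting $\beta_j=\beta_1-2\sum_{l<j}\Delta_l$ and simplifying makes the $\beta_1$-dependence (the gauge) cancel and yields precisely $B(I_i,n)=-\xi_i+\sum_{j=i+1}^{n}\xi_j+c\,2^{-n+1}$ with $\xi_{i+1}=\sum_{j=1}^{i}2^{j-i}\Delta_j$, equivalently the one-step form $\xi_{k+1}=\tfrac12\xi_k+\Delta_k$ --- that is, (\ref{eq:gf})--(\ref{eq:p1}) modulo the budget constraint.

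Finally I would pin down $\sum_{j\ge1}2^{j}\Delta_j\le c$ from \textbf{Guaranteed amount}. Because $(\gamma_n)$ is non-increasing and rewards are non-negative, $B(I_i,n)$ is non-increasing in $n$ and bounded below, so $\lim_{n\to\infty}B(I_i,n)$ exists; a short computation with $\xi_{k+1}=\tfrac12\xi_k+\Delta_k$ evaluates this limit as $2\sum_{k\ge i}\Delta_k$, so Guaranteed amount reads $2\sum_{k\ge i}\Delta_k\ge c_i>0$ (a lower bound on the tails of $\{\Delta_k\}$, which in particular forces the series to converge), while demanding $B(I_n,n)=2^{-n+1}\big(c-\sum_{l=1}^{n-1}2^{l}\Delta_l\big)\ge0$ for every $n$ (order-awareness makes $B(I_n,n)$ the smallest reward in row $n$) gives $\sum_{l\ge1}2^{l}\Delta_l\le c$, the budget constraint. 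The converse --- that every $c$ and $\{\Delta_i\}$ in the stated range produces a $B$ meeting all four objectives --- is then a direct verification using the recurrence for $\xi$ together with the general-form Sybil inequality. I expect the real obstacle to be step (ii): identifying \emph{which} of the quantified Sybil-proofness instances are binding and arguing that imposing equality there, rather than mere inequality, captures exactly the objective-satisfying family; the index bookkeeping when a Sybil is inserted into the chronological order, and making sure the degenerate $S_m=\varnothing$ cases do not over-constrain, is where the delicacy lies.
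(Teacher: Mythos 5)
Your derivation is correct, and it reaches exactly the family in Theorem \ref{th:expr}, but it organizes the deduction differently from the paper. The paper (Appendix A) starts from Sybil-proofness alone: it instantiates the chain $(m,k,S_m,S_k)=(n-1,n,\{i\},\{i,n\}),\ldots,(i,i+1,\{i\},\{i,i+1\})$, telescopes, and sets the summed inequality to equality to get the general form $B(I_i,n)=a_i-\sum_{j=i+1}^{n}a_j$ with $a_i=B(I_i,i)$; order-awareness then forces $a_i\ge 2a_{i+1}$, the slack is parametrized as $a_i=c\,2^{-i+1}-\xi_i$, the timely-payout split $B_1(i)+B_2(n)$ is \emph{observed} to hold automatically for this form, and finally the $\Delta_i$ with $\xi_{i+1}=\tfrac12\xi_i+\Delta_i$ and the budget bound $\sum_j 2^j\Delta_j\le c$ are introduced for the guaranteed amount. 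You instead take timely payout as the first axiom, write $B(I_i,n)=\beta_i+\gamma_n$, and use it to collapse the entire quantified Sybil-proofness family to the scalar recurrence $\gamma_n\ge\beta_{n+1}+2\gamma_{n+1}$ (your instance (ii) is precisely the paper's chain under additivity, so your claim that these instances are the binding ones is sound); solving with equality and substituting $\beta_j=\beta_1-2\sum_{l<j}\Delta_l$ indeed cancels the gauge and reproduces the stated expression — I checked the algebra agrees with the paper's $\xi$-form, and your derivation of the budget constraint from $B(I_n,n)\ge 0$ is the same nonnegativity condition the paper uses (there phrased as $a_{i+1}\ge 0$, i.e.\ $\xi_{i+1}2^{i}\le c$). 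The point you flag as delicate — replacing the binding inequality by equality — is not a gap relative to the paper: the paper makes the identical ``maximize the reward and set equality'' move with the same informal justification, and the theorem's ``should satisfy'' is no more precise. What each route buys: the paper's ordering shows that Sybil-proofness plus order-awareness (under the maximality convention) already force a form in which timely payout comes for free, a slightly stronger structural fact; your ordering buys a cleaner reduction, since additivity turns the subset-quantified condition into a one-step recurrence and makes the bookkeeping of which instances bind essentially trivial. Your converse ("direct verification") is asserted rather than carried out, but the paper does no more, and for the listed instances it is indeed routine.
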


\subsection{Plugging in real-world numbers}

In the introduction, we describe the \$1-million USD campaign opaquely run by the BSA \cite{bsanews}. After a long period of time, only a small percentage of the total amount was paid out to informers. This is obviously a low incentive to the public. We now analyze the outcome if Argus is used for the \$1-million USD campaign. We instantiate the parameters $\Delta_i=2^{-i}*c/l \textrm{ } (\textrm{for } i=1,\ldots,l) \textrm{ or } 0 \textrm{ }(\textrm{for } i>l)$ in Theorem \ref{th:expr} where $l$ can be an arbitrary positive integer. For simplicity, we set $l=20$ in this work, which is a typical boundary to classify copyright infringement \cite{title17}. With this setting, we compare Argus with previous work \cite{arbitrum,truebit}, which are also Sybil-proof models.

Figure \ref{fig:instance} shows the reward amount (in the log scale) that each of the n informers will get in our incentive model (the upper diagram) and the previous model (the lower diagram). Every line in the lower diagram is horizontal, because the previous model does not have order-awareness. As a result, all informers get the same reward, and the reward is exponentially decreased with n, in order to ensure Sybil-proofness. There is no guaranteed amount when a report is confirmed. When the campaign ends, even an early informer may find the reward almost zero if there are many later informers. It is a problematic model to incentivize people.

The upper diagram shows our model. The line of $n=\infty$ corresponds to the guaranteed amounts for the informers. Our lines are also affected by n, but not as drastically as in the previous work. As n increases, the lines become closer to the $n=\infty$ line (note that the n=100 line is visually overlapped with it), suggesting that every existing informer loses some reward when a new informer joins. The design of our reward function ensures that the loss of reward is big enough so that no informer wants to fake a Sybil identity to get another reward.  

Compared to the (problematic) incentive shown in the lower diagram of Figure \ref{fig:instance}, the reward function of Argus is superior in all objectives we set in the beginning of this section. 

\begin{figure}[thbp]
    \vspace{-10pt}
    \centering
    \includegraphics[width=0.43\textwidth]{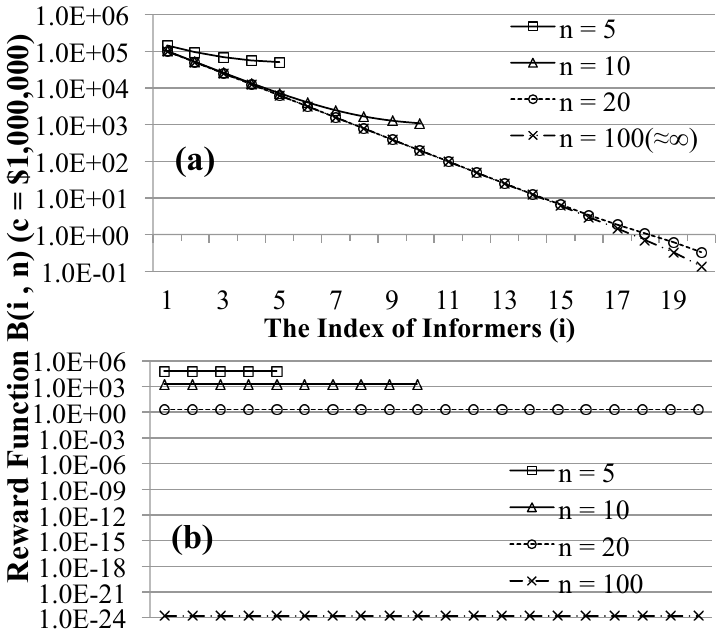}
    \caption{The reward of first twenty informers in (a) our incentive model and (b) previous incentive model \cite{truebit, arbitrum}, respectively (when total number of informers $n=5,10,20,100$). }
    \label{fig:instance}
    \vspace{-0pt}
\end{figure}


\noindent\textbf{Further improving the incentive.} It is worth to acknowledge that Sybil-proofness imposes an intrinsic property of the reward function --- it must decrease exponentially to foil a Sybil attack (see Corollary \ref{co:1} for details). The previous model, our model and future models are all constrained by this bound. For example, in our model, if n=20, the $1$st, $5$th, $10$th and $18$th informers get the reward amounts of \$ $100000$, \$$6250$, \$$196$ and \$$1.07$, respectively. For informers who report too late, if they inspect the blockchain to figure out their indices, they may not see enough incentives. Hence, to further improve the incentive model, a new objective is to conceal the actual number of submissions in blockchain transactions with techniques such as the unlinkable anonymous payment\cite{zerocash} and the submarine commitments \cite{hydra}. When this is achieved, every informer will have the equal hope that he/she may win the first (largest) reward, since the blockchain log reveals no information in this regard. We leave the design and implementation as future work.

%

\section{Information Hiding for Report Submission} \label{sec:zkw}
As mentioned in the introduction, full transparency is a major advancement of Argus. To achieve it, our design requires an effective strategy for information hiding. Specifically for the procedure of report submission, Argus needs to ensure that, although everybody in the open population can see the interactions between an informer and the Argus contract, nobody other than the informer can replay the interactions effectively. (Note that the informer himself will not replay due to the Sybil-proof property of the incentive model discussed in Section \ref{sec:quantify}.) 

\subsection {The objectives of the information-hiding submission} \label{sec:zkpobj}

Before designing the information-hiding submission procedure, we first consider the interests of different roles, and specify the objectives to be achieved:

\squishlistoneone
\item First, the owner's interest is to get an accurate count of the number of piracy copies. The count should not be inflated, meaning that no replay attack as described above should be possible.
\item Second, the owner wants to run a bounty campaign for a long duration, e.g., 3-6 months, so that the magnitude of the piracy infringement can be evaluated more thoroughly. However, an informer's interest is to get a timely payout after a report is submitted successfully.
\item Third, both the owner and the informers want the submission to be concluded in a short period of time. Also, the submission should be efficient. Since Argus runs on a public blockchain, the gas consumption of the procedure is required to be very low.
\squishlistoneoneend

\subsection {Previous proposals and their limitations}
Previous work have proposed commitment scheme \cite{hydra,commitment} and zero-knowledge proof \cite{zkbounty,arbitrum} to eliminate the replay attack. However, in Argus's scenarios, they may either conflict with the interests of owner/informer or encounter serious performance problems:
\squishlistoneone
\item In traditional commitment scheme, the informer submits a commitment of the report (e.g. Hash) in the first period (``commitment phase'') and reveals the report during a following period (``reveal phase'').  Informers' reports are not accepted in the second phase unless a corresponding commitment exists in the first phase. Therefore, the replay attack cannot succeed since the attacker cannot generate valid commitments in the first phase. However, setting the length of the commitment phase has a dilemma because of the different interests of the owner and the informers (definitions in Section \ref{sec:zkpobj}): (1) if the phase is too long, it is not timely for the owner to validate infringers and take subsequent actions, and the informer needs to wait for a long time to get the bounty; (2) if it is too short, the number of the reported pirated copies cannot accurately reflect the severity of the infringement. 
\item In a zero-knowledge proof (ZKP) scheme, the informer can generate a proof to show the possession of a pirated copy. In the ZKP scheme, the bounty can be paid as soon as the contract verifies the proof, which is what we desire. However, the ZKP scheme has a prohibitively high performance overhead and gas cost (see Section \ref{sec:eval}).  
\squishlistoneoneend

\subsection {Multi-period Commitment Scheme}
To achieve the objectives with good performance, we propose a novel technique called multi-period commitment. The scheme can be considered as an extension of the traditional commitment scheme, but has a good performance. Meanwhile, it has the advantage similar to the zero-knowledge proof --- the owner can still set a long bounty campaign period, but confirm every report almost in real time. 

Our scheme allows multiple commit-and-reveal phases so that there are sufficient time windows for the informers to submit piracy reports. For a desired length of collection period $T$, the owner can divide $T$ into $K$ {\bf\em sub-periods} $\set{T_1,\ldots,T_K}$. Each sub-period $T_i \quad (1<i<K)$ is the $i$-th commitment phase and also the $(i-1)$-th reveal phase. In other words, an informer can claim a bounty in $T_{i+1}$ ($i$-th reveal phase) by revealing if the corresponding commitment is submitted in $T_{i}$ ($i$-th commitment phase). 

However, dividing into periods introduces a problem: informers can replay the process of commit-reveal in sub-periods $T_i$ and $T_{i+1}$ to later sub-periods $T_j$ and $T_{j+1} \quad (j>i)$. To defend again this kind of replay attack, we introduce a time stamp into the formula of commitment and process of verification: if we denote the piracy report as $X$ and hash function as $\mathcal{H}$, and there is a predefined list $L[\cdot]=\set{\mathcal{H(H(}X||1)),\ldots,\mathcal{H(H(}X||K))}$ in the contract. Then, the commitment $cm$ submitted in $T_i$ can be $\mathcal{H(H(}X||i)||n)$ where ``$||$'' denotes concatenation and ``$n$'' denotes a randomized nonce. In corresponding reveal phase $T_{i+1}$, $rv=\mathcal{H}(X||i)$ and $n$ should be submitted to the contract for verification that $\mathcal{H(}rv||n)=cm$ and $\mathcal{H(}rv)=L[i]$. By this reinforcement, the aforementioned replay attack cannot pass the verification ``$\mathcal{H(}rv)=L[j]$'' in later sub-periods $T_{j+1} \quad (j>i)$.

The multi-period commitment scheme achieves the objectives given in Section \ref{sec:zkpobj}:
\squishlistoneone
    \item The multi-period commitment scheme foils the replay attack, thus meets the first objective.
    \item The multi-period commitment scheme supports an arbitrary length of collection period $T$, thus meets the second objective.
    \item With a sufficiently large $K$, each sub-period can be short enough\footnote{We will show the storage overhead of setting a large $K$ in Appendix \ref{sec:imple}.}. Thus, informers can reveal reports and get their rewards within a short interval after commitments. The owner gets quick confirmations of the infringers. This achieves the third objective.
\squishlistoneoneend
\section{Guarding against infringer's repudiation}\label{sec:detect}

No role in Argus, even the owner, is assumed trusted. This presents a challenge: when the owner accuses a licensee for leaking a copy, the licensee can refute the accusation by arguing that the copy could have been leaked out by the owner himself. To resolve the dispute, Argus must make the evidence of the accusation so convincing that {\em the probability of the accused infringer being an innocent licensee is extremely small}. Hence, a true infringer's attempt to repudiate will be unsubstantiated.

%

To approach the objective, we use a $1$-out-of-$N$ Oblivious Transfer (OT) \cite{np,otorigin} protocol to achieve this goal. The $1$-out-of-$N$ OT protocols were used for data sharing \cite{ot1,ot2,ot3,ot4}: the owner generates $N$ different copies of data (e.g. via watermarking) and plays OT protocol with the licensee. Then, the licensee can obtain only one copy without owner's knowing which one. Thus, owner can infer the chosen version with a pirated copy.  When there is a dispute between a licensee and the owner, they can submit messages occurred in OT to a {\em trusted judge} for resolving disputes. There is only a $\frac{1}{N}$ probability that the successfully accused licensee is innocent. 

A goal of Argus is not to have any trusted role, so the Argus contract has to implement the functionalities of the ``judge'' on a public blockchain. However, this may introduce a big bandwidth overhead: messages incurred in OT are proportional to $O(N)$. To achieve a desirable security level with a large $N$ (e.g. $10,000$), existing solutions introduce enormous on-chain overhead (e.g. bandwidth, execution, storage). 

To greatly reduce the overhead, we introduce {\em $O(1)$-Appeal} which only incurs $O(1)$ on-chain messages and operations. $O(1)$-Appeal has two properties:
\squishlistoneone
\item {\bf Obliviousness}. It is the property of $1$-out-of-$N$ OT \cite{np,otorigin}: (1) the licensee can arbitrarily choose and obtain $1$ data from $N$ candidate data but cannot know the unchosen data; (2) the owner does not know which data are chosen by the licensee. This property guarantees that the probability to successfully incriminate an innocent licensee is $\frac{1}{N}$ and thus an infringer is hard to deny accusation with a large $N$ (e.g. $10000$). 
\item {\bf Non-repudiation}. When the licensee is accused, the licensee can appeal by showing committed records. When there is a dispute, neither the owner nor licensees can deny which copy the licensee had chosen in the previous OT protocol. The contract is able to give a conclusive answer.
\squishlistoneoneend

\subsection{Constant-Size-OTRecord Appeal ($O(1)$-Appeal)}

The protocol of $O(1)$-Appeal is shown in Figure \ref{fig:ot}, which includes four sub protocols: $\mathsf{Initilize}$, $\mathsf{GenerateEvidence}$, $\mathsf{TransferData}$, $\mathsf{Appeal}$. The first three sub protocols are very similar\footnote{The only difference is that we add a step (step 2) in $\mathsf{GenerateEvidence}$.} to those in \cite{np}, while the fourth sub protocol is our new invention that incurs only an $O(1)$ on-chain cost. Unlike in Section \ref{sec:basicargus}, the owner does not have to commit OTEvidence (i.e. $R$) to the contract. Instead, $R$ can be signed and kept locally.
For simplicity, we assume that the owner and the licensee do not abort during the procedure (e.g. this can be ensured by using the state channel technology \cite{plasma} or the fair exchange protocol \cite{fairswap}). 

\begin{figure}[]
\begin{center}
\fbox{
\procedure[syntaxhighlight=false,mode=text,width=0.44\textwidth,codesize=\footnotesize]{OT with $O(1)$-Appeal}{
\squishlist
\vspace{5pt}
  \item {\bf Public parameters:} field $\ZZ_q$, generator $G \in \mathbb{G}$  
  \item{\bf Owner input:} $N$ versions of data $\set{D_i} (i=1,\ldots,N)$, Owner's private key $sk^O$ 
  \item {\bf Licensee input:}  Licensee's private key $sk^L$  
  \item {\bf Sub Protocols}:
  \squishlistthreeone
  \item {$\mathsf{OT.Initialize}$:} 
  \squishlisttwo
  \item Owner randomly generates and signs $N$ elements $\set{P_1, \ldots, P_N} \in \mathbb{G}^N $ and samples a random number $s \in \ZZ_q$. 
  \item Owner publishes $a_s=s\cdot G$, $\set{P_i}$ and keeps $\{P'_i\}\gets\{s \cdot P_i\} (i=1,\ldots,N)$ locally. 
  \item Licensee samples and keeps two secrets $r \in \ZZ_q$, $l \in [N]$. 
  \squishlisttwoend
  \item {$\mathsf{OT.GenerateEvidence}$:}
    \squishlisttwo
   \item Licensee signs and sends $R = P_l- r \cdot G$ to Owner. 
   \item Owner signs and sends $\sig_{sk^O}(\sig_{sk^L}(R))$ to Licensee.
   \squishlisttwoend
  \item {$\mathsf{OT.TransferData}$:} 
  \squishlisttwo
   \item Owner computes $R' \gets s\cdot R$, $Q_i \gets P'_i-R'$ and sends $\{E_i\} \gets \{\mathcal{H}(Q_i, a_s, i) \oplus D_i\} (i=1,\ldots,N)$ to Licensee. 
   \item Licensee gets $D_l= E_l \oplus \mathcal{H}(r \cdot a_s, a_s, l)$.
   \squishlisttwoend
  \item {$\mathsf{Appeal}$:} 
  \squishlisttwo
   \item Licensee being accused of leaking $D_{l_x}$ can send a tuple $(\sig_{\ast}(R)$, $r$, $l)$ to contract $\mathcal{C}$ (i.e. judge), where $\sig_\ast(R)$ is signed by both Owner and Licensee. 
   \item $\mathcal{C}$ verifies if $P_l - r \cdot G=R$ and $l \neq l_x$. If yes, Licensee is falsely accused. Otherwise the appeal fails.
   \squishlisttwoend
\squishlistthreeoneend
\squishlistend
}
}
\end{center}
\vspace{-10pt}
\caption{Licensee gets one data $D_l$ from $N$ data $\set{D_1,\ldots,D_N}$ from Owner via $O(1)$-Appeal with $OTEvidence=R$ and $OTRecord=(r,l)$.} \label{fig:ot}
\vspace{-20pt}
\end{figure}

Different from previous work that utilizes transferred vectors (e.g. $\set{E_1,\ldots,E_N}$) in $\mathsf{TransferData}$ for dispute resolving, we find that utilizing one transferred variable $R$ in $\mathsf{GenerateEvidence}$ can be of the same effect. Our key discovery is that $R$ has a one-to-one correspondence to the licensee's chosen index $l$ (see Theorem \ref{th:7}). Therefore, $R$ can be used as the evidence to indicate the licensee's chosen index in the dispute-resolving stage (i.e. the appeal stage). 

As shown in $\mathsf{Appeal}$ in Figure \ref{fig:ot}, if $R$ is signed by owner/licensee and indicates that the corresponding index $l$ differs from accused index $l_x$, we can conclude that the licensee is wrongly accused. We prove the Obliviousness and Non-repudiation of OT with $O(1)$-Appeal in Appendix \ref{sec:appealsecurity}.

Though $O(1)$-Appeal has greatly reduced the on-chain overhead of the appeal stage, there is still a considerable cost of the off-chain bandwidth in $\mathsf{TransferData}$. To reduce the off-chain bandwidth overhead, in Section \ref{sec:construct}, we will further leverage a PIR protocol \cite{xpir} and slightly adapt $O(1)$-Appeal, which guarantees that the size of the data transferred is about the size of data $D_l$. In addition, we will show how to integrate $O(1)$-Appeal with the information-hiding report scheme in Appendix \ref{sec:bigpic}.

\section{Implementing the Argus System}\label{sec:construct}

The previous three sections explain the main objectives and the core ideas of Argus. It is important to recognize that the objectives are not separate problems to solve individually. The Argus contract needs to achieve the objectives altogether in a coherent design. Due to the page limit, we only provide a sketch of our construction and implementation here. The holistic view and the implementation details of the Argus system are provided in Appendix \ref{sec:bigpic} and Appendix \ref{sec:imple}, respectively.

With corresponding watermark algorithms, current Argus system supports three data types: image \cite{image}, audio \cite{audio} and software \cite{software}. A Merkle tree structure is leveraged to reduce the on-chain storage: for any list of data, only the Merkle root of the list is uploaded to the blockchain. We also leverage Private Information Retrieval (PIR) \cite{xpir} to reduce the bandwidth overhead of downloading data for the licensees. 

\section{Security Analysis and Performance Evaluation}\label{sec:eval}
In this section, we first analyze the security of Argus then describe the experimental setup for the performance evaluation. The evaluation results include the performance measurements and the cost of Argus transactions.

\subsection{Security Analysis} \label{sec:analysis}

\begin{table}[]
\centering
\footnotesize
\caption{Interests/threats of participants in argus}
\setlength{\tabcolsep}{3pt}
 \renewcommand{\arraystretch}{1.1}
\begin{tabular}{|c|c|c|}
\hline
Participants & Interest if honest                                                                                         & Threat if malicious                                                             \\ \hline
Owner        & \begin{tabular}[c]{@{}c@{}}To discover infringers and\\ tally the number of copies\end{tabular} & \begin{tabular}[c]{@{}c@{}}To falsely accuse \\ innocent licensees\end{tabular}       \\ \hline
Licensee     & \begin{tabular}[c]{@{}c@{}}To win in an appeal \\ because of innocence\end{tabular}  &  \begin{tabular}[c]{@{}c@{}}To appeal despite \\ the guilt\end{tabular}      \\ \hline
Informer     & \begin{tabular}[c]{@{}c@{}} To submit a honest \\ report once  \end{tabular}   & \begin{tabular}[c]{@{}c@{}}To submit a fake report, \\ steal a report  or submit a \\ valid report multiple times. \end{tabular} \\ \hline
\end{tabular}
\label{tab:interest}
\vspace{-15pt}
\end{table}

The detailed security analysis is given in Appendix \ref{sec:proof}. Without loss of generality, we only present an analysis which considers a game of five participants $\mathsf{Game}^{Argus}_{O, L_1, {L_2},I_1, {I_2}}$, where $O, L_1, {L_2},I_1, {I_2}$, representing the Owner, one Licensee, another Licensee, one Informer and another Informer, respectively. Their interests if honest and their threats if malicious are summarized in Table
\ref{tab:interest}. Based on $\mathsf{Game}^{Argus}_{O, L_1, {L_2},I_1, {I_2}}$, we can easily extend our security analysis to scenarios with multiple owners, informers and licensees. Since $L_1$ and $I_1$ are identical to $L_2$ and ${I_2}$ respectively, to demonstrate the security of Argus, we enumerate all cases that $O, L_1, I_1$ is individually honest (i.e. following the protocol) while other four participants may collude. For each case, we find that the interest of the honest participant will not be affected. In other words, we conclude that {\em if a participant (i.e. owner, licensee or informer) has no fault, the interest of this participant will not be hurt even when others collude.} 

\subsection{Performance Evaluation}
\noindent{\bf Experimental Setup.} Our testbed consists of relatively low-end Azure Virtual Machines (D2s\_v3, 2 vCPUs, 8GB RAM, Linux) for the nodes of owner, licensees, informers and blockchain nodes. For blockchain nodes, we adopt the default PoW algorithm (ie. Ethash \cite{ethereum}) and parameters (block interval, block Gaslimit, etc.) of current Ethereum (date: 2021-04-05) to simulate the public blockchain. The average block interval of Ethereum is set to 12 seconds. The bandwidth of uploading and downloading is tested as around $50$ MB/s. To guarantee a sufficient OT security and a short confirmation of reports, we by default set the OT versions (i.e., N) in Section \ref{sec:detect} as 10,000 while the number of periods (i.e., K) in Section \ref{sec:zkw} as 1000. In other words, the probability to accuse an innocent licensee is $\frac{1}{10000}$ while the report confirmation time during a 180-day period is 4.3 hours. The implementation details of Argus is described in Appendix \ref{sec:bigpic} and Appendix \ref{sec:imple}. 




\vspace{5pt}
\noindent {\bf  Evaluation Results.} We evaluate Argus system from various perspectives of practicality, such as the throughput of system, client latency, gas cost on Ethereum, etc. All protocols of Argus are tested end-to-end. We also give the comparison between Argus and previous work from aforementioned perspectives. Due to the page limit, we only present evaluation results of $\mathsf{Initiate}$, $\mathsf{ShareData}$, $\mathsf{ReportPiracy}$ and $\mathsf{Appeal}$, which are the most resource-consuming protocols (details in Appendix \ref{sec:view}). In other words, these four protocols can introduce considerable overhead in throughput, latency, storage and cost to impede the Argus's adoption in practice. For an intuitive understanding of the gas cost in Ethereum, we represent the gas cost in the number of sending simplest transactions\footnote{Ethereum's simplest transaction only transfers ether and costs 21,000 gas.}.

We first evaluate $\mathsf{Initiate}$, which is the setup phase of Argus. The main elements of $\mathsf{Initiate}$ is to generate an Merkle tree and to deploy the contract. For every licensee, there is a time complexity of $O(N*K)$ for the owner to generate the Merkle tree, which needs about $10$ minutes. This process is totally offline and can be parallelized and accelerated with high-end machines. Deploying the contract costs about $5.2\times 10^6$ gas, which equals to the cost of sending $\sim$248 simplest Ethereum transactions. In addition, we also evaluate the off-chain storage cost for Argus, which is about 960 KB per licensee, which has a space complexity of $O(N)$. 

\begin{table*}[tbhp]
\footnotesize
 \centering
 \caption{High-level comparison between state-of-the-art work and Argus}
 \vspace{-5pt}
 \setlength{\tabcolsep}{3pt}
 \renewcommand{\arraystretch}{1.1}
  \begin{threeparttable}
 \begin{tabular}{llcccccccc}
 \hline
 \multicolumn{1}{c}{Desired properties}            & \multicolumn{1}{c}{Details}           & BSA \cite{bsa} & Custos \cite{custos} & AWM \cite{ot3} & ZKP \cite{zerocash} & Hydra \cite{hydra} & Arbitrum \cite{arbitrum} & This work        \\ \hline
Trusted payments & Full transparency         & $\times$\tnote{*}             & $\surd$                  & N/A\tnote{**}                         & $\surd$                    & $\surd$               & $\surd$                      & $\surd$          \\
   Better payments                   & Timely/guaranteed payout   & $\times$          & $\surd$                  & N/A                         & $\surd$                    & $\times$                & $\times$                      & $\surd$          \\  
   Identifying infringers                  & Strong accusation  & $\surd$         & $\times$                  & $\surd$                     & N/A                & N/A                 & N/A                      & $\surd$     \\ 

 \multirow{2}{*}{Assessing severity}  
 & Sybil-proofness              & $\surd$               & $\times$                      & N/A                         & N/A                    & N/A                & $\surd$              & $\surd$          \\  
 & Information-hiding               & $\surd$               & $\times$                      & N/A                         & $\surd$                    & $\surd$        & $\surd$                       & $\surd$          \\
                                 
 Scalability                     & High throughput       & $\surd$           & $\times$                  & $\times$                     & $\times$                & $\surd$            & $\surd$                  & $\surd$ \\ \hline
 \end{tabular}
 \begin{tablenotes}
 \item[*] Symbols of ``$\surd$'' and ``$\times$'' denote corresponding property is ``achievable'' and ``hard to achieve'', respectively. 
 \item[**] Properties are marked as ``not applicable (N/A)'' if corresponding work is not designed for these properties.
 \end{tablenotes}
 \end{threeparttable}\label{tab:compare}
 \vspace{-15pt}
 \end{table*}
 
\begin{figure}[t]
    \centering
    \includegraphics[width=0.49\textwidth]{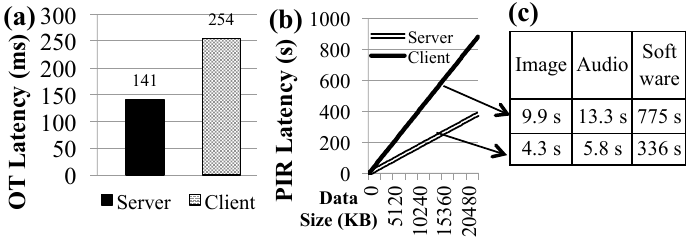}
    \vspace{-15pt}
    \caption{The performance of OT and PIR: (a) OT Latency (b) PIR Latency (c) PIR Latency for different data}
    \label{fig:otpir}
    \vspace{-15pt}
\end{figure}

We evaluate the latency/bandwidth in data sharing. $\mathsf{ShareData}$ includes two phases, OT (Section \ref{sec:detect}) and PIR (Appendix \ref{sec:imple}). While the PIR phase can be done offline, the OT phase can directly affect the throughput of the Argus system especially when there are a number of clients (licensees) concurrently communicating with the server (owner):
\squishlistoneone
 \item For the OT phase, the evaluation result is shown in Figure \ref{fig:otpir} (a). It takes about $141$ ms for the server to complete the phase. Therefore, a throughput of $7.1$ OT requests per second per machine can be served. With a stronger server (Azure D32s\_v3, $32$ cores, $128$GB RAM), the throughput can be further improved to $82.6$ per second. Note that, the throughput can be linearly scaled up by increasing the number of machines.
 \item For the PIR phase, we list the latency of client/server in Figure \ref{fig:otpir} (b) and Figure \ref{fig:otpir} (c). As shown in Figure \ref{fig:otpir} (b), the PIR latency is proportional to the data size, which corresponds to a bandwidth of $206$ Kbps and $476$ Kbps for the client and server, respectively. And with Figure \ref{fig:otpir} (b), we have the PIR latency for data types in Table \ref{tab:wm} (Appendix \ref{sec:wmimple}), which is shown in Figure \ref{fig:otpir} (c). Without PIR, the direct downloading time of the 10,000 copies for the licensee would be 51 s, 67 s and 3800 s, respectively. Similar to the OT phase, the latency and bandwidth of PIR phase can also be linearly and considerably improved with more and higher-performance (i.e. faster core and larger RAM) machines. Since PIR can be offline after the OT phase, the bandwidth of Argus system is determined by the OT phase.
\squishlistoneoneend

\begin{figure}[t]
    \centering
    \includegraphics[width=0.49\textwidth]{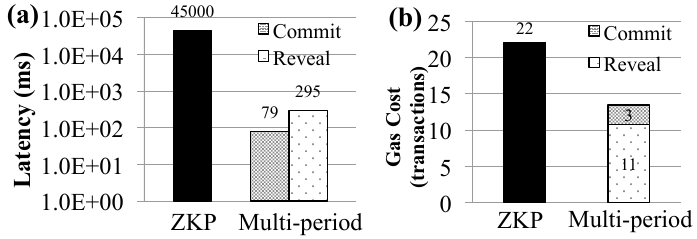}
    \vspace{-20pt}
    \caption{Comparison of previous work (i.e. ZKP \cite{zkbounty,zkpcommit}  ) and multi-period commitment scheme: (a) Latency (b) Gas Cost}
    \label{fig:zkp}
    \vspace{-10pt}
\end{figure}

\begin{figure}[t]
    \centering
    \includegraphics[width=0.49\textwidth]{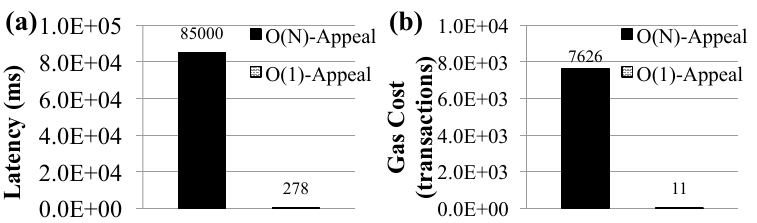}
    \vspace{-20pt}
    \caption{Comparison of previous work (i.e. O(N)-Appeal \cite{ot1,ot2}  ) and O(1)-Appeal: (a) Latency (b) Gas Cost}
    \label{fig:appeal}
    \vspace{-15pt}
\end{figure}

\begin{figure}[t]
    \centering
    \includegraphics[width=0.42\textwidth]{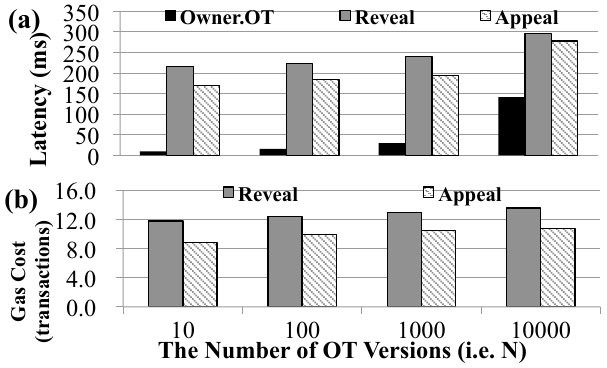}
    \vspace{-5pt}
    \caption{Sensitivity analysis of \#OT versions: (a) Latency (b) Gas Cost}
    \label{fig:sensitivity}
    \vspace{-20pt}
\end{figure}

$\mathsf{ReportPiracy}$ includes the commit phase and the reveal phase (Section \ref{sec:zkw}). The evaluation results of $\mathsf{ReportPiracy}$ are shown in Figure \ref{fig:zkp}. The latency in Figure \ref{fig:zkp} (a) denotes the time which is spent by Informer's machine to submit a transaction to Argus contract until the transaction is executed by blockchain nodes. In other words, the consensus time is excluded in the latency result. The latency of commit and reveal are negligible compared to the block time of Ethereum. 

The on-chain cost of our multi-period scheme is also negligible. As shown in Figure \ref{fig:zkp} (b), the gas cost for commit and reveal are about $8*10^4$ and $2*10^5$, which equals to $\sim$3 and $\sim$11 simplest Ethereum transactions, respectively. In other words, a total cost equivalent to sending 14 simplest transactions is required for an informer to report a piracy in our system. From the gas consumption, given that the maximum gas limit of every Ethereum block is around $12,000,000$, we can conclude that the Transactions Per Second (tps) for ReportPiracy is about $12*10^6/(2.0*10^5*12) \approx 5.0$, which is 11\% of the theoretically maximum Ethereum throughput\footnote{The maximum tps of Ethereum is about $47.6$ when the block only contains simplest Ethereum transactions. By contrast, the average tps of Ethereum is $15.0$ currently (2021-04-05).} (commit and reveal occur in different blocks and thus only the more expensive reveal is considered). We also compare our multi-period scheme with ZKP scheme \cite{zkbounty,zkpcommit}. Compared to ZKP scheme, our scheme can reduce the informer client latency by $99.3$\% and the gas cost by $39$\%.

As in Figure \ref{fig:appeal}, we also evaluate $\mathsf{Appeal}$ (Section \ref{sec:detect}) and get similar results as $\mathsf{ReportPiracy}$: a tps of $4.3$ is achieved given the gas consumption of $\mathsf{Appeal}$ ($\sim 2*10^5$, i.e., the cost of $\sim$11 simplest Ethereum transactions). The appeal protocol proposed by previous work \cite{ot1,ot2} (denoted as ``$O(N)$-Appeal'') introduces unacceptable on-chain operations and exceeds the maximum gas-limit of an Ethereum block. Thus, we cannot evaluate corresponding latency and gas consumption in an end-to-end style. Instead, we estimate them using the number of on-chain operations. Results show that $O(1)$-appeal can significantly reduce the client latency and the gas cost: compared to previous appeal scheme, the informer client latency can be reduced by $99.7$\% and the gas cost by a factor of $960$X, respectively.

As shown in Figure \ref{fig:sensitivity}, we also investigate the impact of choosing different numbers of OT versions (i.e. $N$), from $10$ to $10,000$. On one hand, as addressed in Section \ref{sec:detect}, the value of $\frac{1}{N}$ determines the probability $\phi$ of false accusation; On the other hand, the increase of $N$ has a negative impact on the performance, gas cost and storage overhead of Argus system (Appendix \ref{sec:crypto}). We can see that the increase of OT affects the OT latency of owner significantly (Figure \ref{fig:sensitivity} (a)) while the gas cost increases logarithmically (Figure \ref{fig:sensitivity} (b)) since we use Merkle tree structure in our design (details in Appendix \ref{sec:view})\footnote{We omit the commit operation of informer in Figure \ref {fig:sensitivity} since the latency and gas cost of commit is unrelated to the value of $N$.}.

To summarize the performance of Argus system: in the normal case, which does not involve piracy-reporting or appeal, only $\mathsf{ShareData}$ is involved. In this case, the throughput of Argus system is equal to the throughput of $\mathsf{ShareData}$ ($82.6$ off-chain transactions per second per machine). In the uncommon case of piracy reporting, the throughput of the reporting transactions is $5.0$ tps. In the rare case in which the appeal procedure is performed (i.e., the owner is malicious), the throughput of the appeal transactions is $4.3$ tps.

\section{Related Work}\label{sec:relate}

We summarizes the comparison of Argus with previous work in Table \ref{tab:compare}. Column three and four (i.e. BSA and Custos) are two competitive solutions of Argus while column five to eight (AWM, ZKP, Hydra and Arbitrum) are primitives corresponding to $O(1)$-appeal, multi-period commitment and incentive model, respectively. Details are listed below:
\squishlist
\item{\bf\em Comparison with previous solutions.} Centralized schemes such as BSA \cite{bsa} can considerably disincentivize informers due to the opacity of payments. To increase the trust of payments, Custos \cite{custos} leverages blockchain. However, Custos does not consider strong accusation and cannot assess the severity of piracy, which is important in law enforcement. By contrast, Argus achieves full transparency along with all other properties. 
\item{\bf\em Comparison with state-of-the-art primitives:} 
\squishlistthree
\item Previous work of Asymmetric Watermarking (AWM) \cite{ot1,ot2,ot3,ot4} all rely on the existence of a ``trusted judge''. In addition, their appeal protocols introduce $O(N)$ bandwidth cost, which is unacceptable in blockchain scheme. 
\item Current Zero-Knowledge Proof (ZKP) \cite{zerocash,zkpcommit} can eliminate replay attack by enabling informers to prove their acquaintance of $id_{ij}$ without revealing the answer. However, current ZKP can introduce considerable performance overhead and gas consumption which limits system scalability. 
\item Commitment scheme (or so-called ``commit-reveal'') \cite{arbitrum} can address replay attack by dividing the single submission into phases of commitment and revealing, which is much more efficient than ZKP schemes. However, to fully evaluate piracy, the commitment phase should be relatively long, which delays owner's confirmation of piracy and informers' payments. 
\item  Sybil-proofness \cite{truebit,arbitrum} is introduced to disincentivize informers to report repeatedly: the more times a informer reports, the less bounty the informer can claim. However, existing sybil-proof incentive model merely depends on the total number of informers, which is known only when the collection period ends. 
\item By contrast, Argus overcomes the limitations of above primitives. In addition, further integration and optimization are introduced in this work.
 \squishlistthreeend

\squishlistend

\section{Conclusions}


Anti-piracy is fundamentally a procedure that relies on collecting  data from the open anonymous population, so how to incentivize credible reports is a question at the center of the problem. Academic researchers and real-world companies have come up with various incentive mechanisms. However, without explicitly prescribing the interests of different roles and the objectives of an anti-piracy system, designing such a mechanism has been more of a ``creative art'' than a systematic and disciplined exploration. Currently, there is no good framework to evaluate these designs and actual systems. 

The most essential value of our work is not the Argus system itself, but the approach leading to its design and implementation. We first state clearly the interests of different roles and the goal of full transparency without trusting any role. Once these are stated, all the design requirements naturally surface, such as Sybil-proofness, information-hiding submission, resistance to infringer's repudiation, etc; once these design requirements are clear, we are able to {\em deduce}, rather than {\em invent}, the general form of valid solutions; the deduced general form then boils down to a set of unavoidable technical obstacles, which we overcome by adapting crytopographic schemes, building contract code and optimizing performance. 

Argus exemplifies the outcome of this disciplined approach. It is superior to existing solutions in terms of the trust assumption and the assured properties. In particular, we draw the following conclusions: (1) it is feasible to build a fully transparent solution without introducing a trusted role. This could enable a paradigm shift for anti-piracy incentive solutions. Also, it is a compelling application scenario for public blockchains; (2) such a solution indeed consolidates all roles' interests fairly, i.e., as long as a role is not at fault, his/her interest will not be impaired by other malicious or at-fault roles; (3) besides logic soundness, the solution is economically practical, as a result of our effective optimizations.

\bibliographystyle{IEEEtran}
\bibliography{srds21_argus}
\appendix \label{sec:appendix}

\subsection{Mathematical Deduction of Reward Function} \label{sec:math}

\noindent \textbf{Deducing the General Form with Sybil-proofness.}
Different from \cite{arbitrum}, instead of limiting that $B(I_i,n)=B(I_j,n) (i \neq j)$, we deduce the general form of $B(I_i,n)$ with sybil-proofness merely from the definition. By setting $(m,k,S_m, S_k) =(n-1, n, \{i\}, \set{i,n}), (n-2, n-1, \{i\}, \set{i,n-1}), \ldots, (i, i+1, \{i\}, \set{i,i+1})$ in $ \sum_{i \in S_m} B(I_i,m) \ge \sum_{i \in S_n} B(I_i,n) $, we get :
\begin{eqnarray} \label{eq:inequality}
\left\{ \begin{array}{l}
B(I_i,n) \le  B(I_i,n-1) -  B(I_{n},n)  \\
B(I_i,n-1) \le  B(I_i,n-2) -  B(I_{n-1},n-1) \\
\dots\\
B(I_i, i+1) \le  B(I_i,i) -  B(I_{i+1},i+1)
  \end{array} \right.
\end{eqnarray}

Adding up above equations, we can deduce $B(I_i,n) \le B(I_i,i) -\sum_{j=i+1}^n B(I_j,j)$. By maximizing $B(I_i,n)$ and set the sign of inequality as equality, we have the general form of sybil-proof reward as the following theorem:
\begin{theorem}[General Form of Sybil-proof Reward]\label{th:2}
{\em Let $B(I_i,i)=a_i$ where parameters $\{a_i\} (i=1,2,\dots)$ is a set of arbitrarily positive numbers only if $a_i \ge \sum_{j=i+1}^{\infty} a_j$,  the following equation holds  $\forall n \in \mathbb{N}$ and $\forall i \le n $}:
\begin{equation} \label{eq:form}
 B(I_i,n) = a_i-\sum_{j=i+1}^n a_j
\end{equation}
\end{theorem}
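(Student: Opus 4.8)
The plan is to establish the two directions inherent in any ``general form'' characterization: first, that every Sybil-proof reward function must have the form \eqref{eq:form}; second, that the stated side condition on $\{a_i\}$ is exactly what is needed for \eqref{eq:form} to be a legitimate (Sybil-proof, nonnegative) reward function. The forward direction is essentially already done in the text preceding the theorem, so I would begin by recording it cleanly: starting from the telescoping chain \eqref{eq:inequality}, add the inequalities to obtain $B(I_i,n) \le B(I_i,i) - \sum_{j=i+1}^n B(I_j,j)$. Then I would argue that to \emph{maximize} the reward to each informer (which is the owner's stated intent — receive as many good-faith reports as possible by giving as much incentive as is consistent with Sybil-proofness), one takes equality throughout, and writes $a_i := B(I_i,i)$. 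This immediately yields $B(I_i,n) = a_i - \sum_{j=i+1}^n a_j$, i.e.\ \eqref{eq:form}.

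Next I would verify the converse: given any sequence $\{a_i\}$ of positive reals satisfying $a_i \ge \sum_{j=i+1}^\infty a_j$ for all $i$, the function defined by \eqref{eq:form} is well-defined and Sybil-proof. Well-definedness requires the tail sum $\sum_{j=i+1}^\infty a_j$ to converge, which follows from the side condition (for any $i$, the partial sums $\sum_{j=i+1}^N a_j$ are bounded by $a_i$ and increasing, hence convergent). For Sybil-proofness I would take arbitrary $m \le k$ and arbitrary $S_m \subseteq \{1,\dots,m\}$, $S_k \subseteq \{1,\dots,k\}$ with $S_m \subseteq S_k$, and compute
\begin{equation}
\sum_{i \in S_m} B(I_i,m) - \sum_{i \in S_k} B(I_i,k) = \sum_{i \in S_m}\Bigl(B(I_i,m)-B(I_i,k)\Bigr) - \sum_{i \in S_k \setminus S_m} B(I_i,k).
\end{equation}
The first bracket equals $\sum_{j=m+1}^k a_j$ for each $i \in S_m$ (the two telescoping sums differ exactly by the block $a_{m+1},\dots,a_k$), and the terms $B(I_i,k)$ in the second sum are each at most $a_i \le a_1$; bounding $|S_k \setminus S_m| \le k - m$ and using the side condition $a_i \ge \sum_{j>i} a_j$ to control $B(I_i,k) = a_i - \sum_{j=i+1}^k a_j \ge 0$, one checks the whole expression is $\ge 0$. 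I would also note in passing that nonnegativity of every individual reward, $B(I_i,n) \ge 0$, is exactly the case $S_m = \{i\}$, $m = i$, $S_k = \{i\}$, $k = n$ combined with the side condition, so the side condition is not merely sufficient but necessary.

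The main obstacle I anticipate is the bookkeeping in the converse direction — specifically, confirming that the side condition $a_i \ge \sum_{j=i+1}^\infty a_j$ is precisely the right hypothesis (neither too weak nor stronger than needed) for Sybil-proofness to hold for \emph{all} nested subset pairs, not just singletons. The singleton case forces $B(I_i,n) \ge 0$ in the limit, i.e.\ $a_i \ge \sum_{j=i+1}^\infty a_j$; the delicate point is showing this single family of constraints already implies the general nested-subset inequality, so that no additional constraints on $\{a_i\}$ are hidden in the definition. I expect this to reduce, after the algebra above, to the elementary observation that removing informers from the tally can only help the retained informers (each $B(I_i,\cdot)$ is nonincreasing in its second argument) while the dropped informers contribute nonnegative amounts that are subtracted off — but making that rigorous for arbitrary $S_m \subseteq S_k$ is where the care is needed.
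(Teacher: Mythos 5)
Your necessity direction is exactly the paper's own argument: instantiate Sybil-proofness at $(m,k,S_m,S_k)=(j-1,j,\{i\},\{i,j\})$ for $j=i+1,\ldots,n$, telescope and add to get $B(I_i,n)\le B(I_i,i)-\sum_{j=i+1}^n B(I_j,j)$, then take equality to maximize payouts given the diagonal values $a_i=B(I_i,i)$, with the side condition $a_i\ge\sum_{j>i}a_j$ keeping rewards nonnegative. That is all the paper proves for this theorem, so up to there you match it; the sufficiency check you add is extra (and a reasonable completeness concern), but it is where your sketch has a genuine problem.

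After your (correct) decomposition, what must be shown is $|S_m|\sum_{j=m+1}^k a_j \ge \sum_{i\in S_k\setminus S_m} B(I_i,k)$. Your proposed bound --- each $B(I_i,k)\le a_i\le a_1$ together with $|S_k\setminus S_m|\le k-m$ --- only yields the upper bound $(k-m)\,a_1$, which can be exponentially larger than the left-hand side (take $a_j=c\,2^{-j+1}$ and $m$ large: the left side is about $c\,2^{-m+1}$), so "one checks the whole expression is $\ge 0$" does not follow from what you wrote. Moreover $|S_k\setminus S_m|\le k-m$ is not implied by the paper's literal definition of Sybil-proofness over arbitrary nested subsets; read literally (e.g. $m=k$ with $S_m\subsetneq S_k$, or $S_m=\emptyset$) that definition is violated by every positive reward function, including \eqref{eq:form}, so the converse only makes sense under the intended Sybil reading in which the coalition is nonempty and its $k-m$ forged submissions occupy indices in $\{m+1,\ldots,k\}$, i.e. $S_k\setminus S_m\subseteq\{m+1,\ldots,k\}$. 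Under that reading the needed inequality holds by a simpler route than your bounding: $\sum_{i\in S_k\setminus S_m}B(I_i,k)\le\sum_{i=m+1}^{k}B(I_i,k)\le\sum_{j=m+1}^{k}a_j\le|S_m|\sum_{j=m+1}^{k}a_j$, the middle step because $B(I_i,k)=a_i-\sum_{j=i+1}^k a_j\le a_i$. Finally, your claim that the singleton case forces $a_i\ge\sum_{j>i}a_j$ is off: Sybil-proofness with singletons only gives monotonicity of $B(I_i,n)$ in $n$; the side condition is the translation of the separate (implicit) requirement that rewards be nonnegative for all $n$.
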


From Theorem \ref{th:2}, we can have following corollary which defines the upper bound of $B(I_i,n)$ and $B(I_i,n)$'s property of exponentially decreasing:
\begin{corollary}[$B(I_i,n)$ decreases exponentially with $i$]\label{co:1}
{\em $B(I_i,n) \le a_1 * 2^{-i+2}$}.
\end{corollary}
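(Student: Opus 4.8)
The plan is to reduce the claimed bound to a statement purely about the tail sums of the sequence $\{a_i\}$. First I would note that, by Theorem \ref{th:2}, $B(I_i,n) = a_i - \sum_{j=i+1}^n a_j \le a_i$ since every $a_j$ is positive; hence it suffices to prove the $n$-free bound $a_i \le a_1 \cdot 2^{-i+2}$.

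Next I would introduce the tail sums $R_i := \sum_{j=i}^{\infty} a_j$, which are finite: the hypothesis of Theorem \ref{th:2} at $i=1$ asserts $a_1 \ge \sum_{j=2}^{\infty} a_j$, so the series converges, and consequently each $R_i < \infty$. The defining constraint $a_i \ge \sum_{j=i+1}^{\infty} a_j = R_{i+1}$ then rewrites as $R_i = a_i + R_{i+1} \ge 2 R_{i+1}$, i.e. $R_{i+1} \le \tfrac{1}{2} R_i$. Iterating from $i=1$ gives the geometric decay $R_i \le 2^{-(i-1)} R_1$.

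It remains to bound $R_1$: using the constraint at $i=1$ once more, $R_1 = a_1 + R_2 \le a_1 + a_1 = 2 a_1$, so $R_i \le 2 a_1 \cdot 2^{-(i-1)} = a_1 \cdot 2^{-i+2}$. Finally $a_i$ is one of the nonnegative summands of $R_i$, hence $a_i \le R_i \le a_1 \cdot 2^{-i+2}$; combining with the first step yields $B(I_i,n) \le a_i \le a_1 \cdot 2^{-i+2}$, as desired.

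I do not anticipate a serious obstacle; the only point worth checking is that the infinite tails are well defined, which is built into the hypothesis $a_i \ge \sum_{j>i} a_j$. The essential observation — the one that converts the ``absolutely summable with dominant head'' condition into genuine exponential decay — is that the very same condition, read as $R_i \ge 2 R_{i+1}$, forces the tails, and therefore the terms, to at least halve at every step.
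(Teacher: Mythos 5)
Your argument is correct, and it takes a genuinely different route from the paper's. The paper proves Corollary \ref{co:1} by working directly with the reward values: it forms the weighted sum $B(I_1,n)+\sum_{i\ge 2} 2^{i-2}B(I_i,n)$, observes via Equation \eqref{eq:form} that this telescopes to a quantity bounded by $a_1$, and then drops all but one nonnegative term to get $2^{i-2}B(I_i,n)\le a_1$. You instead work entirely at the level of the coefficient sequence $\{a_i\}$: reading the hypothesis $a_i\ge\sum_{j>i}a_j$ as ``each tail $R_i$ is at least twice the next tail,'' you get geometric decay $R_i\le 2^{-(i-1)}R_1\le a_1\cdot 2^{-i+2}$, and then conclude via the two easy comparisons $B(I_i,n)\le a_i\le R_i$. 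Your version buys a slightly stronger, $n$-free intermediate statement ($a_i\le a_1\cdot 2^{-i+2}$, i.e.\ the admissible coefficients themselves decay geometrically), and it avoids the bookkeeping of the weighted telescoping identity (whose stated closed form in the paper has a harmless off-by-one in the boundary term, though the inequality $\le a_1$ it relies on is fine); the paper's version, in turn, bounds $B(I_i,n)$ without detouring through the tails and makes the ``exponential budget'' interpretation of Sybil-proofness visible in a single identity. The only hypotheses you use---positivity of the $a_j$ and the tail constraint, which also guarantees convergence of the tails---are exactly those of Theorem \ref{th:2}, so there is no gap.
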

\noindent {\em Proof.}
With Equation \eqref{eq:form}, we can have following equation:
$B(I_1,n)+\sum^n_{i=2}B(I_i,n)*2^{i-2} = a_1-2^{n-2}*a_n \le a_1$. Given that $B(I_i,n) \ge 0$, we can conclude that $B(I_i,n)*2^{i-2} \le a_1 \Rightarrow B(I_i,n) \le a_1 * 2^{-i+2}$. \qed

\vspace{5pt}
\noindent \textbf{Enriching with Order-awareness.}
Interestingly, if we set $a_i = c*2^{-i+1}$ in Equation \eqref{eq:form} where $c$ is the maximum reward for informers, we can get $B(I_i,n)=c*2^{-n}$, which is identical to the ones in \cite{arbitrum,truebit}. In addition, from equation \eqref{eq:inequality}, we can find that $B(I_i,n)$ decreases with $n$.

To ensure the property of order-awareness, we apply Equation \eqref{eq:form} to $B(I_i,n) \ge B(I_{i+1},n)$ and have (we set $a_1= c*2^{-1+1} = c$ for simplicity):
\begin{eqnarray}
 \nonumber a_i-\sum_{j=i+1}^n a_j \ge a_{i+1}-\sum_{j=i+2}^n a_j  \Rightarrow a_i \ge 2*a_{i+1} \\
 \Rightarrow  a_i \le c*2^{-i+1} (i \ge 1)
\end{eqnarray}

Therefore, we can set $ a_i = c*2^{-i+1}- \xi_i $ where $\xi_i \in \mathbb{R}_{\ge0}$ and $\xi_1=0$. With $a_i \ge 2*a_{i+1}$ and Theorem \ref{th:2} , we have following theorem about the sybil-proof reward with order-awareness:

\begin{theorem}[Sybil-proofness and Order-awareness]\label{th:3}
{\em With $\set{\xi_i} \in \mathbb{R}^*_{\ge 0}$ where $0 \le \xi_i \le 2* \xi_{i+1} \le c*2^{-i+1}$ and $\xi_1=0$, the following equation holds }:
\begin{equation} \label{eq:order}
 B(I_i,n) = -\xi_i+\sum_{j=i+1}^n \xi_j+c*2^{-n+1}
\end{equation}
\end{theorem}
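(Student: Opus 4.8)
The plan is to build directly on Theorem~\ref{th:2}, which already reduces every Sybil-proof reward function to the shape given in Equation~\eqref{eq:form}, $B(I_i,n)=a_i-\sum_{j=i+1}^n a_j$, controlled by a nonnegative sequence $\{a_i\}$ with $a_i\ge\sum_{j>i}a_j$. Adding Order-awareness should only carve out a sub-family of admissible $\{a_i\}$, and the role of the $\{\xi_i\}$ is merely to re-coordinatize that sub-family. So I would (i) rewrite Order-awareness as a constraint on $\{a_i\}$, (ii) pass to the $\xi$-coordinates and check the feasible sets match, and (iii) substitute and simplify.

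For step (i): plugging Equation~\eqref{eq:form} into $B(I_i,n)\ge B(I_{i+1},n)$, the common tail $\sum_{j=i+2}^n a_j$ cancels and what remains is $a_i-2a_{i+1}\ge 0$, i.e. $a_i\ge 2a_{i+1}$ for every $i$ (this is also visible from Equation~\eqref{eq:inequality}). Normalizing $a_1=c$ and iterating gives $a_i\le c\,2^{-i+1}$ by induction, so writing $a_i=c\,2^{-i+1}-\xi_i$ with $\xi_i\ge 0$ and $\xi_1=0$ is legitimate.

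For step (ii): I would translate each side condition. Non-negativity $a_{i+1}\ge 0$ becomes $2\xi_{i+1}\le c\,2^{-i+1}$; the decay $a_i\ge 2a_{i+1}$ becomes $\xi_i\le 2\xi_{i+1}$; stringing these together is exactly the chain $0\le\xi_i\le 2\xi_{i+1}\le c\,2^{-i+1}$ in the statement. It remains to confirm that the summability requirement $a_i\ge\sum_{j>i}a_j$ inherited from Theorem~\ref{th:2} is automatic here: since $\xi_i\le 2\xi_{i+1}$ forces $\xi_{i+k}\ge 2^{-k}\xi_i$, a geometric sum gives $\sum_{j>i}\xi_j\ge\xi_i$, and using $\sum_{j>i}c\,2^{-j+1}=c\,2^{-i+1}$ this is precisely $\sum_{j>i}a_j\le a_i$. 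Hence the map $\{\xi_i\}\mapsto\{a_i\}$ is a bijection between the two feasible sets.

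For step (iii): substitute $a_i=c\,2^{-i+1}-\xi_i$ back into Equation~\eqref{eq:form}; the finite identity $\sum_{j=i+1}^n 2^{-j+1}=2^{-i+1}-2^{-n+1}$ collapses the $c$-terms to a single $c\,2^{-n+1}$, leaving $B(I_i,n)=-\xi_i+\sum_{j=i+1}^n\xi_j+c\,2^{-n+1}$, which is Equation~\eqref{eq:order}; conversely any $\{\xi_i\}$ obeying the chain rebuilds a valid $\{a_i\}$, hence by Theorem~\ref{th:2} together with $a_i\ge 2a_{i+1}$ a genuinely Sybil-proof and Order-aware $B$. The one step that is not pure bookkeeping is (ii) --- specifically, checking that the tail-summability side condition of Theorem~\ref{th:2} is already entailed by $\xi_i\le 2\xi_{i+1}$ --- so that is where I would be most careful; everything else is telescoping sums and a geometric-series identity.
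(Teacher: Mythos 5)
Your proposal is correct and follows essentially the same route as the paper: derive $a_i \ge 2a_{i+1}$ from Order-awareness applied to Equation~\eqref{eq:form}, set $a_1=c$ so that $a_i \le c\,2^{-i+1}$, reparameterize $a_i = c\,2^{-i+1}-\xi_i$, and substitute using the geometric-sum identity to obtain Equation~\eqref{eq:order}. Your explicit check that the tail condition $a_i \ge \sum_{j>i}a_j$ of Theorem~\ref{th:2} is implied by $\xi_i \le 2\xi_{i+1}$ is a welcome bit of extra care that the paper leaves implicit, but it does not change the approach.
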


\noindent \textbf{Enriching with Timely Payout.}
With Equation \eqref{eq:order}, we can divide $ B(I_i, n)$ into two parts defined as follows: ${B_1  \triangleq \lim_{n \rightarrow \infty} B(I_i, n) = -\xi_i+\sum_{j=i+1}^\infty \xi_j}$ and ${B_2  \triangleq B(I_i, n) -B_1(i) = -\sum_{j=n+1}^\infty \xi_j+c*2^{-n+1}}$. Interestingly, we can find that the expression of $B_1$ only relates to the submission index $i$ and $B_2$ only relates to the total submission number $n$, respectively. Thus, we have following theorem about the {\em timely-payout} property:

\begin{theorem}[Sybil-proofness, Order-awareness and Timely Payout]\label{th:4}
{\em $B_1(i)=-\xi_i+\sum_{j=i+1}^\infty \xi_j$ can be allocated to the $i$-th successful Informer $I_i$ {\bf\em immediately}, while $B_2(n)= B(I_i, n) -B_1 = -\sum_{j=n+1}^\infty \xi_j+c*2^{-n+1}$ can be allocated to $I_i$ after $n$ is determined (i.e. after the collection period ends)}.
\end{theorem}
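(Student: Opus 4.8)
Everything follows by unpacking the closed form of $B(I_i,n)$ furnished by Theorem~\ref{th:3}, namely $B(I_i,n)=-\xi_i+\sum_{j=i+1}^{n}\xi_j+c\cdot 2^{-n+1}$ with $\xi_1=0$ and $0\le\xi_j\le 2\xi_{j+1}\le c\cdot 2^{-j+1}$. The plan is to (i) define $B_1(i):=\lim_{n\to\infty}B(I_i,n)$, show this limit exists and admits a formula involving only $i$; (ii) set $B_2(n):=B(I_i,n)-B_1(i)$ and show the $\xi_i$-terms cancel, so that $B_2$ depends only on $n$; and (iii) check that both $B_1(i)$ and $B_2(n)$ are nonnegative, so each is a legitimate payment rather than a bookkeeping artifact. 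The decomposition $B(I_i,n)=B_1(i)+B_2(n)$ is then immediate, and the scheduling claim is a direct reading of it.

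For step (i) I would first establish convergence: the chain $2\xi_{j+1}\le c\cdot 2^{-j+1}$ gives $\xi_j\le c\cdot 2^{-j+1}$ for all $j\ge 1$, hence $\sum_{j\ge 1}\xi_j\le 2c<\infty$; since moreover $c\cdot 2^{-n+1}\to 0$, the limit exists and
\[
B_1(i)\;=\;\lim_{n\to\infty}\Bigl(-\xi_i+\sum_{j=i+1}^{n}\xi_j+c\cdot 2^{-n+1}\Bigr)\;=\;-\xi_i+\sum_{j=i+1}^{\infty}\xi_j ,
\]
which visibly depends only on $i$. For step (ii), subtracting and telescoping the partial sum against its limit,
\[
B_2(n)\;=\;B(I_i,n)-B_1(i)\;=\;c\cdot 2^{-n+1}-\sum_{j=n+1}^{\infty}\xi_j ,
\]
in which the occurrences of $\xi_i$ have cancelled, so $B_2$ depends only on $n$.

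For step (iii) I would argue nonnegativity on both sides. Iterating $\xi_{j+1}\ge\xi_j/2$ yields $\xi_{i+k}\ge\xi_i\,2^{-k}$, so $\sum_{j=i+1}^{\infty}\xi_j\ge\xi_i\sum_{k\ge 1}2^{-k}=\xi_i$ and therefore $B_1(i)\ge 0$ (strictly positive precisely when the $\xi_j$'s are chosen so as to meet the Guaranteed-Amount objective). For $B_2$ I would either reuse $\xi_j\le c\cdot 2^{-j+1}$ to get $\sum_{j\ge n+1}\xi_j\le c\cdot 2^{-n+1}$, or, more cleanly, note from the inequalities~\eqref{eq:inequality} that $B(I_i,n)$ is non-increasing in $n$, whence $B_2(n)=B(I_i,n)-\inf_m B(I_i,m)\ge 0$. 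I would then spell out the payout rule this licenses: the instant $I_i$'s report is confirmed, the contract pays $B_1(i)$ — no knowledge of the eventual $n$ is required — and once the collection period closes and $n$ is fixed, every confirmed informer receives the common top-up $B_2(n)$; the total received by $I_i$ is $B_1(i)+B_2(n)=B(I_i,n)$, exactly the value prescribed by Theorem~\ref{th:3}.

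The telescoping rearrangement is routine; the step that needs care is (iii), since ``immediate'' and ``deferred'' payouts are meaningful only when $B_1(i)\ge 0$ and $B_2(n)\ge 0$. This is exactly where the two-sided constraint $\xi_j\le 2\xi_{j+1}\le c\cdot 2^{-j+1}$ does its work: the left inequality forces the immediate portion $B_1$ to be nonnegative, while the right one (equivalently, the monotonicity of $B(I_i,\cdot)$ in $n$) forces the deferred portion $B_2$ to be nonnegative.
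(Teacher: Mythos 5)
Your proposal is correct and follows essentially the same route as the paper: define $B_1(i)=\lim_{n\to\infty}B(I_i,n)$ and $B_2(n)=B(I_i,n)-B_1(i)$ from the closed form of Theorem~\ref{th:3}, and observe that the first depends only on $i$ and the second only on $n$. Your added checks (convergence of $\sum_j\xi_j$ and nonnegativity of $B_1$ and $B_2$) are sound and merely make explicit points the paper treats implicitly or defers to the Guaranteed-Amount step.
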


\vspace{5pt}
\noindent \textbf{Enriching with Guaranteed Amount.}
By examining the expression of $B_2(n)$, we can get\footnote{Since $B(I_i, n)$ decreases with $n$, $B(I_i, n)$ always exceeds $B_1(i)=\lim_{n \rightarrow \infty} B(I_i, n)$. Thus, $B_2(n)=B(I_i, n)-B_1(i) \ge 0$.} the inequality of $B_2(i)$: 
\begin{eqnarray}
 \nonumber 0 \le  B_2(n) \le c*2^{-n+1} \Rightarrow \lim_{n \rightarrow \infty}B_2(n) = 0  \\ 
 \Rightarrow \lim_{n \rightarrow \infty} B(I_i,n)= \lim_{n \rightarrow \infty} B_1(i) = B_1(i)
\end{eqnarray}

Therefore, to achieve the property of {\em guaranteed amount}, we should properly set the values of $\set{\xi_i}$ to ensure $\lim_{n \rightarrow \infty}B_1(i)>0\quad (i \le l)$. 

With $\xi_i \le 2* \xi_{i+1}$ in Theorem \ref{th:3}, we introduce set $\set{\Delta_i} \in  \mathbb{R}^{*}_{\ge 0}$ which satisfy $\xi_{i+1}=\frac{1}{2}\xi_i + \Delta_{i}$. Then, from the formula of $B_1(i)$ we have:
\begin{equation} \label{eq:b1}
 B_1(i) = 2*\sum_{j=i}^\infty \Delta_{j}
 \end{equation}
 
 From $\xi_{i+1}=\frac{1}{2}\xi_i + \Delta_{i}$ and $\xi_{1}=0$, we can get:
 \begin{eqnarray} \label{eq:x1}
  \xi_{i+1}=\sum_{j=1}^{i} 2^{j-i}*\Delta_j\\
  \Rightarrow \sum_{j=1}^{i} 2^{j}*\Delta_j = \xi_{i+1} * 2^{i} \le c \label{eq:constraint1}
 \end{eqnarray}
 
 Since $ \sum_{j=1}^{i} 2^{j}*\Delta_j $ increases with $i$, thus Equation \eqref{eq:constraint1} is equivalent to:
 \begin{equation} \label{eq:constraint2}
 \sum_{j=1}^{\infty} 2^{j}*\Delta_j \le c
 \end{equation}
 
Therefore, we have following theorem for reward function achieving four aforementioned properties:

\begin{theorem}[Sybil-proofness, Order-awareness, Timely, \newline Payout and Guaranteed Amount]\label{th:5} 
{\em $\lim_{n \rightarrow \infty} B(I_i, n)>0$ only if $\set{B_1(i)}$ defined in Theorem \ref{th:4} are expressed by Equation \eqref{eq:b1} where non-zero set $\set{\Delta_i} \in  \mathbb{R}^{*}_{\ge 0}$ satisfy Equation \eqref{eq:constraint2}}.
\end{theorem}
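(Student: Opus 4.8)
The plan is to read Theorem \ref{th:5} as the last link in the chain of reformulations built up through Theorems \ref{th:2}--\ref{th:4}: a reward function is already pinned down (by Theorems \ref{th:3} and \ref{th:4}) to the order-aware, timely-payout form \eqref{eq:order}, and I want to show that adding the \emph{guaranteed amount} requirement $\lim_{n\to\infty}B(I_i,n)>0$ for every index of interest ($i\le l$) is precisely what forces $\set{B_1(i)}$ into the shape \eqref{eq:b1} with a non-zero non-negative sequence $\set{\Delta_i}$ satisfying \eqref{eq:constraint2}. First I would invoke Theorem \ref{th:4}: since $B(I_i,n)=B_1(i)+B_2(n)$ with $0\le B_2(n)\le c\cdot 2^{-n+1}$, we get $B_2(n)\to 0$ and hence $\lim_{n\to\infty}B(I_i,n)=B_1(i)$. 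This step already collapses the whole statement to a claim about $B_1(i)$ alone.

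Next I would introduce the slack variables $\Delta_i$. The order-awareness chain from Theorem \ref{th:3} gives $\xi_i\le 2\xi_{i+1}$, so $\Delta_i:=\xi_{i+1}-\tfrac12\xi_i\ge 0$, i.e. the recursion $\xi_{i+1}=\tfrac12\xi_i+\Delta_i$ with $\xi_1=0$. Solving this first-order recursion yields the closed form \eqref{eq:x1}, $\xi_{i+1}=\sum_{j=1}^{i}2^{j-i}\Delta_j$. Substituting $\Delta_j=\xi_{j+1}-\tfrac12\xi_j$ into $B_1(i)=-\xi_i+\sum_{j=i+1}^{\infty}\xi_j$ and telescoping the non-negative series gives $\sum_{j=i}^{\infty}\Delta_j=\sum_{j=i+1}^{\infty}\xi_j-\tfrac12\sum_{j=i}^{\infty}\xi_j=\tfrac12 B_1(i)$, which is exactly \eqref{eq:b1}: $B_1(i)=2\sum_{j=i}^{\infty}\Delta_j$. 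Consequently $\lim_{n\to\infty}B(I_i,n)=B_1(i)>0$ for all $i\le l$ holds iff $\sum_{j\ge i}\Delta_j>0$ for each such $i$; in particular $\set{\Delta_i}$ cannot vanish identically, which supplies the ``non-zero set'' clause.

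Then I would recover the budget constraint. The order-awareness bound in Theorem \ref{th:3} also says $2\xi_{i+1}\le c\cdot 2^{-i+1}$, i.e. $2^{i}\xi_{i+1}\le c$; multiplying the closed form \eqref{eq:x1} through by $2^i$ turns the left side into $\sum_{j=1}^{i}2^{j}\Delta_j$, so $\sum_{j=1}^{i}2^{j}\Delta_j\le c$ for every $i$ (this is \eqref{eq:constraint1}). These partial sums are non-decreasing in $i$, so letting $i\to\infty$ gives $\sum_{j=1}^{\infty}2^{j}\Delta_j\le c$, namely \eqref{eq:constraint2}. The same bound also guarantees $\sum_j\Delta_j<\infty$ and (via Corollary \ref{co:1}) $\sum_j\xi_j<\infty$, so $B_1(i)$ is finite and the telescoping in the previous paragraph is legitimate.

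I expect essentially no deep obstacle here: the content is mostly bookkeeping that assembles already-derived identities. The one place warranting care is the manipulation $B_1(i)=-\xi_i+\sum_{j>i}\xi_j=2\sum_{j\ge i}\Delta_j$, which reindexes and splits infinite sums; since all the $\xi_j$ and $\Delta_j$ are non-negative, the rearrangements are valid by monotone convergence, with finiteness supplied by \eqref{eq:constraint2} and Corollary \ref{co:1}. I would also be explicit that the theorem is a necessity (``only if'') statement, so the logic runs ``guaranteed amount holds $\Rightarrow$ the displayed form and constraint'': the $\Delta_i$ always exist (they are defined from the $\xi_i$), and what the hypothesis actually buys is the non-vanishing of $\sum_{j\ge i}\Delta_j$ together with the inherited bound \eqref{eq:constraint2}.
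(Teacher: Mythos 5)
Your proposal is correct and follows essentially the same route as the paper: observe $B_2(n)\to 0$ so the limit equals $B_1(i)$, introduce $\Delta_i=\xi_{i+1}-\tfrac12\xi_i\ge 0$ from the order-awareness constraint, obtain $B_1(i)=2\sum_{j\ge i}\Delta_j$ (Equation \eqref{eq:b1}), and turn the bound $2^i\xi_{i+1}\le c$ applied to the closed form \eqref{eq:x1} into \eqref{eq:constraint1} and, by monotonicity of the partial sums, \eqref{eq:constraint2}. The only differences are that you spell out the telescoping and convergence justifications the paper leaves implicit (and the finiteness of $\sum_j\xi_j$ really follows from Theorem \ref{th:3}'s bound $\xi_i\le c\,2^{-i+1}$ rather than Corollary \ref{co:1}, a harmless attribution slip).
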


With Theorem \ref{th:5}, we can directly have Theorem \ref{th:expr} in Section \ref{sec:quantify}.

\subsection{Security Analysis of $O(1)$-Appeal} \label{sec:appealsecurity}
In this section, we first prove the one-to-one correspondence between $R$ and $l$. Then, we demonstrate the obliviousness and the non-repudiation of $O(1)$-Appeal.
\begin{theorem}[One-to-one Correspondence between $R$ and $l$]\label{th:7}
{\em The transferred message $R$ in $\mathsf{Appeal}$ corresponds to $l$ in $\mathsf{GenerateEvidence}$.}
\end{theorem}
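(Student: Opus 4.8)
\textbf{Proof proposal for Theorem \ref{th:7}.}

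The plan is to show that the map $l \mapsto R$ defined implicitly in $\mathsf{OT.GenerateEvidence}$ is a well-defined bijection between the index set $[N]$ (together with the auxiliary randomness $r$) and the set of admissible evidence values $R \in \mathbb{G}$, and more precisely that, given the public data, no index $l' \neq l$ can be made consistent with the same $R$. Recall that in $\mathsf{OT.GenerateEvidence}$ the licensee sends $R = P_l - r\cdot G$, where $\set{P_1,\ldots,P_N}$ are the public group elements fixed and signed by the owner in $\mathsf{OT.Initialize}$, and $r \in \ZZ_q$ is the licensee's secret. In $\mathsf{Appeal}$ the licensee reveals the pair $(r,l)$ and the contract checks $P_l - r\cdot G = R$. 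So the statement to establish is: for a fixed $R$ produced honestly with choice $l$, the verification equation $P_{l'} - r'\cdot G = R$ has, for each candidate $l'$, at most one solution $r' \in \ZZ_q$, and a licensee who tries to pass the appeal with $l' \neq l$ must therefore produce $r'$ with $r'\cdot G = P_{l'} - R = P_{l'} - P_l + r\cdot G$, i.e. must know the discrete logarithm of $P_{l'} - P_l$ relative to $G$.

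First I would note that since $\mathbb{G}$ is a cyclic group of prime order $q$ generated by $G$, the map $r' \mapsto r'\cdot G$ is a bijection from $\ZZ_q$ to $\mathbb{G}$; hence for each index $l'$ the equation $P_{l'} - r'\cdot G = R$ determines $r'$ uniquely (namely $r'$ is the discrete log of $P_{l'}-R$). This already gives that $R$ together with the revealed $r'$ pins down $l'$ in the sense that an honest licensee has exactly one valid $(r,l)$ opening available to it. Second, I would argue the ``hard direction'': a cheating licensee who wishes to open the honestly-generated $R$ at a different index $l' \neq l$ would need $r' = r + \mathrm{dlog}_G(P_{l'}-P_l)$; since the $P_i$ are sampled uniformly and independently at random by the owner, $P_{l'}-P_l$ is a uniformly random group element independent of the licensee's view, so computing its discrete logarithm contradicts the discrete-logarithm hardness assumption in $\mathbb{G}$. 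Thus, except with negligible probability, the only index the licensee can successfully reveal against $R$ is the original $l$, which is what ``one-to-one correspondence'' means operationally in the appeal.

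The main obstacle I anticipate is being precise about \emph{whose} perspective the correspondence is claimed from and what computational assumption is invoked. Information-theoretically, $R = P_l - r\cdot G$ with $r$ uniform is a uniformly random element of $\mathbb{G}$ and hence statistically independent of $l$ — that is exactly the obliviousness property and means the owner learns nothing about $l$ from $R$ alone. The ``one-to-one correspondence'' therefore cannot be an information-theoretic statement about $R$ determining $l$; it must be read as: \emph{conditioned on the licensee's committed secret $r$} (which is fixed once and reused as the OTRecord), the value $R$ and the index $l$ determine each other, and no polynomial-time licensee can exhibit a second valid opening. So the key step is to state the claim as ``$R$ and $r$ jointly determine $l$, and binding to a second $l'$ reduces to discrete log,'' and then the proof is the short argument above. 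I would close by remarking that this is precisely the hook used in $\mathsf{Appeal}$: the jointly-signed $R$ plus the opening $(r,l)$ cannot be equivocated, which is what the subsequent Non-repudiation claim will build on.
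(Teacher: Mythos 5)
Your proposal is correct and follows essentially the same route as the paper: equivocating the evidence, i.e.\ producing two openings $(l,r)$ and $(l',r')$ of the same $R$, forces the relation $(r'-r)\cdot G = P_{l'}-P_{l}$, and since the $P_i$ are sampled at random by the owner this amounts to computing a discrete logarithm, which is infeasible under the DL assumption. Your added remark that the correspondence is a computational binding statement (and not information-theoretic, which would contradict obliviousness) is a useful clarification but does not change the argument.
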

\begin{proof}
Let us assume that one $R$ can correspond to two chosen index $l$ and $l'$ (then the licensee can deny the accusation). Therefore, from $\mathsf{GenerateEvidence}$, the licensee should know a tuple $(l,l',r,r')$ to satisfy $R = P_l- r \cdot G$ and $R = P_{l'}- r' \cdot G$, which can deduce the following equation:
\begin{equation} \label{eq:7}
(r-r') \cdot G = P_{l}-P_{l'}
\end{equation}

However, since $\set{P_i}$ are randomly generated by the owner, according to Discrete Logarithm Assumption, the possibility is negligible that the licensee finds $r$ and $r'$ to fulfill Equation \eqref{eq:7}. Thus, we can conclude that $R$ has a one-to-one correspondence to $l$. 
\end{proof}

With Theorem \ref{th:7}, we can now prove the obliviousness and non-repudiation of $O(1)$-Appeal. Accordingly, we have following two theorems:

\begin{theorem}[Obliviousness of $O(1)$-Appeal]\label{th:8}
{\em With $\mathsf{Initilize}$, $\mathsf{GenerateEvidence}$ and $\mathsf{TransferData}$ in Figure \ref{fig:ot}, the owner can transfer $D_l$ to the licensee with obliviousness.}
\end{theorem}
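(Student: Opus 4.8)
\textbf{Proof proposal for Theorem \ref{th:8} (Obliviousness of $O(1)$-Appeal).}

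The plan is to show two things separately, matching the two halves of the Obliviousness property stated in Section \ref{sec:detect}: (i) the licensee learns $D_l$ but nothing about $D_j$ for $j \neq l$, and (ii) the owner gains no information about the index $l$ chosen by the licensee. Both follow from the algebraic structure of the messages in $\mathsf{OT.Initialize}$, $\mathsf{OT.GenerateEvidence}$ and $\mathsf{OT.TransferData}$, so the first step is simply to write out the key identity. From $R = P_l - r\cdot G$ and $R' = s\cdot R$ we get $R' = s\cdot P_l - r\cdot(s\cdot G) = P'_l - r\cdot a_s$, hence $Q_l = P'_l - R' = r\cdot a_s$. Therefore the licensee can form $\mathcal{H}(Q_l, a_s, l) = \mathcal{H}(r\cdot a_s, a_s, l)$ and recover $D_l = E_l \oplus \mathcal{H}(r\cdot a_s, a_s, l)$, which establishes correctness.

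For part (i), I would argue that for every $j \neq l$ the licensee would need $\mathcal{H}(Q_j, a_s, j)$ with $Q_j = P'_j - R' = s\cdot P_j - s\cdot R = s\cdot(P_j - P_l + r\cdot G)$. Computing this requires knowledge of $s\cdot P_j$ (equivalently $s\cdot(P_j-P_l)$), and since the licensee only sees $a_s = s\cdot G$ and the $\{P_i\}$, extracting $s\cdot P_j$ would amount to solving a Computational Diffie--Hellman instance in $\mathbb{G}$. Under the CDH assumption (which the paper already invokes in the discrete-log form for Theorem \ref{th:7}), $Q_j$ is unpredictable, and modeling $\mathcal{H}$ as a random oracle, $\mathcal{H}(Q_j, a_s, j)$ is uniformly random and independent of the licensee's view, so $E_j$ perfectly hides $D_j$. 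For part (ii), I would observe that the only message the owner receives that depends on $l$ is $R = P_l - r\cdot G$; since $r$ is sampled uniformly from $\ZZ_q$ and the $\{P_i\}$ are fixed group elements, $R$ is uniformly distributed over $\mathbb{G}$ regardless of $l$, so its distribution is independent of $l$ and the owner learns nothing about the licensee's choice. (This is the standard Naor--Pinkas argument, and indeed these three sub-protocols are stated to be essentially those of \cite{np}.)

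The main obstacle is part (i): precisely pinning down the reduction to CDH and handling the fact that the licensee sees $E_j$ for \emph{all} $j$ simultaneously, not just one unchosen index. I would need to argue that a single random-oracle query revealing any $Q_j$ for $j\neq l$ already breaks CDH (on input $a_s = s\cdot G$ and $P_j - P_l$, output $s\cdot(P_j-P_l) = Q_j + R'$, using that $R' = s\cdot R$ is computable by the reduction which knows $R$ and can query its own oracle or is given $s\cdot R$), so a straightforward hybrid over the at most $N-1$ unchosen indices suffices; the random-oracle modeling of $\mathcal{H}$ then upgrades ``cannot compute $Q_j$'' to ``$E_j$ is distributed independently of $D_j$.'' The remaining steps --- correctness and owner-side obliviousness --- are essentially the line of algebra above plus a one-line uniformity argument, so they are routine.
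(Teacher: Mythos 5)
Your proposal is correct in substance but takes a genuinely different route from the paper. The paper's own proof of Theorem \ref{th:8} is a one-line delegation: it observes that $\mathsf{Initilize}$, $\mathsf{GenerateEvidence}$ and $\mathsf{TransferData}$ are taken directly from the Naor--Pinkas protocol \cite{np}, which is malicious-secure, and concludes that both halves of Obliviousness are inherited (adding only the remark that the later $\mathsf{Appeal}$ phase reveals $l$ only once the licensee is EXONERATED). You instead re-derive the underlying security argument from scratch: the algebraic identity $Q_l = r\cdot a_s$ for correctness, a CDH-plus-random-oracle reduction for the claim that unchosen $E_j$ hide $D_j$, and the perfect uniformity of $R = P_l - r\cdot G$ over $\mathbb{G}$ for owner-side obliviousness. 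What your route buys is self-containment and explicitness about assumptions (CDH, ROM) that the paper leaves implicit inside the citation; what the paper's route buys is brevity and, importantly, coverage of the fully malicious setting, since \cite{np} is invoked as malicious-secure, whereas your sketch argues mostly in the honest-receiver style (a licensee of the form $R = P_l - r\cdot G$) and only gestures at the hybrid/reduction details. If you wanted your version to match the paper's claimed strength, you would need to handle an arbitrary, possibly malformed $R$ chosen by a malicious licensee --- the key tool there is exactly the discrete-log argument of Theorem \ref{th:7}, which shows the licensee can know the representation $P_j - R = r_j\cdot G$ for at most one index $j$, so at most one $Q_j$ is computable and all other $E_j$ remain hidden. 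Also note a small slip in your reduction sketch: $Q_j + R' = s\cdot P_j$, not $s\cdot(P_j - P_l)$; the quantity your reduction can actually output from a hash query on $Q_j$ is $s\cdot(P_j - P_l) = Q_j - r\cdot a_s$ (using its knowledge of $r$), which still yields a valid CDH solution, so the fix is cosmetic.
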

\begin{proof} 
Since the first three sub-functions of $O(1)$-Appeal (i.e. $\mathsf{Initilize}$, $\mathsf{GenerateEvidence}$ and $\mathsf{TransferData}$) are directly derived from \cite{np}, which is malicious-secure, the two features of Obliviousness are satisfied. 
\end{proof}

Note that after $\mathsf{Appeal}$, the index chosen by the licensee is revealed which seems to violate Obliviousness. However, at that time, the licensee should be in the EXONERATED state (see Appendix \ref{sec:bigpic} for details).

\begin{theorem}[Non-repudiation of $O(1)$-Appeal]\label{th:9}
{With $\mathsf{Appeal}$ in Figure \ref{fig:ot}, both the owner and the licensee cannot deny which copy (i.e. $l$) the licensee chose previously.}
\end{theorem}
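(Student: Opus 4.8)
\textbf{Proof proposal for Theorem \ref{th:9} (Non-repudiation of $O(1)$-Appeal).}

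The plan is to reduce non-repudiation to the one-to-one correspondence of Theorem \ref{th:7} together with the unforgeability of the signature scheme, handling each of the two parties (owner and licensee) separately as the ``repudiator''. The only object the contract $\mathcal{C}$ ever inspects in $\mathsf{Appeal}$ is the tuple $(\sig_\ast(R), r, l)$, where $\sig_\ast(R)$ is a signature on $R$ produced jointly by owner and licensee during $\mathsf{GenerateEvidence}$, and $\mathcal{C}$ accepts the appeal iff $P_l - r\cdot G = R$ and $l \neq l_x$. So the statement to establish is: given that $\sig_\ast(R)$ verifies, there is a unique index $l$ (with a witness $r$) for which the contract's check passes, and neither party can later substitute a different pair $(l', r')$.

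First I would argue the licensee cannot repudiate. Suppose the licensee, after having run $\mathsf{GenerateEvidence}$ with a genuine choice $l$ and secret $r$ so that $R = P_l - r\cdot G$, wishes to later claim a different index $l'$. To make $\mathcal{C}$ accept he must exhibit $r'$ with $P_{l'} - r'\cdot G = R = P_l - r\cdot G$, i.e. $(r-r')\cdot G = P_l - P_{l'}$. By Theorem \ref{th:7} this is ruled out under the Discrete Logarithm Assumption because the $\{P_i\}$ are sampled uniformly and independently by the owner; hence the licensee is bound to the single index $l$ determined by $R$. Conversely, the owner cannot repudiate either: $R$ was signed by the owner in step 2 of $\mathsf{GenerateEvidence}$ (the owner returns $\sig_{sk^O}(\sig_{sk^L}(R))$), so by unforgeability of the signature scheme the owner cannot later deny that this particular $R$ — and therefore, by the correspondence, this particular $l$ — was the one transferred; nor can the owner fabricate a different signed $R'$ pointing to another index, since $\sig_\ast$ must also carry the licensee's signature on it, which the owner cannot produce. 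I would also note the dual direction for completeness: a malicious licensee cannot forge $\sig_\ast(R)$ for an $R$ the owner never signed, so a guilty licensee ($l = l_x$) cannot concoct a fake appeal either — the check $l \neq l_x$ is then decisive.

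The main obstacle I anticipate is making precise what ``cannot deny'' means as a game between $\mathcal{C}$ and a possibly colluding pair, and in particular pinning down that the signature $\sig_\ast(R)$ genuinely commits both parties to the \emph{same} $R$ that was used in $\mathsf{TransferData}$ (where the owner computes $R' \gets s\cdot R$ and the data $D_l$ is actually derived). One has to check that an owner who behaves honestly in $\mathsf{GenerateEvidence}$ but deviates in $\mathsf{TransferData}$ cannot create an inconsistency that the contract would fail to catch; this is where the assumption ``owner and licensee do not abort'' and the malicious-security of the underlying OT of \cite{np} (already invoked for Theorem \ref{th:8}) do the work. I would close by remarking that combining the two directions yields the claim: conditioned on $\sig_\ast(R)$ verifying, the accepted index is uniquely $l$, it equals the licensee's true OT choice, and $\mathcal{C}$'s verdict (falsely accused iff $l \neq l_x$) is therefore conclusive and non-repudiable by either side.
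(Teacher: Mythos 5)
Your proposal is correct and follows essentially the same route as the paper's own proof: the joint signature $\sig_\ast(R)$ binds both owner and licensee to the specific $R$, and Theorem \ref{th:7} (one-to-one correspondence between $R$ and $l$ under the Discrete Logarithm Assumption) then pins the accepted appeal to the licensee's true choice $l$, arguing each party's inability to repudiate separately. The extra discussion of consistency with $\mathsf{TransferData}$ is a reasonable refinement but not part of, nor needed for, the paper's argument.
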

\begin{proof}
According to $\mathsf{GenerateEvidence}$, $R$ is signed by both the owner and the licensee. Thus, the licensee cannot fake another $R$ to submit to the contract. Therefore, based on Theorem \ref{th:7}, in order to guarantee that $P_l - r \cdot G=R$ in $\mathsf{Appeal}$, the licensee can only submit the correct $l$ used in $\mathsf{GenerateEvidence}$. In other words, the licensee cannot successfully appeal if accused with a correct $l$. Similarly, since $R$ is signed by the owner and Theorem \ref{th:7} holds, the owner cannot deny the $l$ which can pass $P_l - r \cdot G=R$.  
\end{proof}

\subsection{Construction Details of Argus System} \label{sec:bigpic}

{
\renewcommand{\baselinestretch}{0.3}
\setlength{\baselineskip}{11.4pt}

\begin{figure*}[thpb]
\begin{center}
\fbox{ 
\procedure[syntaxhighlight=false,mode=text,codesize=\small]{}{
\hspace{-8pt} $\mathsf{Initiate}$\\
\squishlist
   	\item INPUTS:
	\squishlistthree
	\item security parameter $(\lambda, N)$.
	\item the number of collection periods $K$.
	\item the number of Licensees $M$.
	\item copyrighted data $D$.
	\squishlistthreeend
	\item OUTPUTS: 
	\squishlistthree
	\item secret value $s$ of Owner.
	\item public parameters $a_{s}$ and $\vec{P}=\set{P_1, \ldots , P_N}$.
	\item $M\times N$ versions of watermarked data $D^w_{M\times N}$.
	\item $\mathsf{IDMap/IDTree}$ locally stored by Owner.
	\squishlistthreeend
\squishlisttwo 
	\item Owner deploys the contract $\mathcal{C}$ to a ledger.
	\item Owner randomly samples a secret $s \gets \bin^\lambda$. 
	\item Owner computes and publishes $a_s \gets \pgen(s)$.
	\item For each $j \in [N]$: 
	\squishlistfour
	\item Owner randomly samples $P_j \gets \pgen(\bin^\lambda)$.
	\item Owner invokes $\mathcal{C}.\mathsf{Store(\textrm{``$p$''},\ast)}$ where $\ast$ as $P_j$.
	\squishlistfourend
	\item For each  $i \in [M]$ and each $j \in [N]$: 
	\squishlistfour
	\item Owner randomly samples $id_{ij} \gets \bin^\lambda$.
	\item Owner watermarks $D$ with $id_{ij}$: $D^w_{ij} \gets E_w(D, id_{ij})$. 
	\item Owner updates a Hashmap:
		$\mathsf{IDMap.Insert}(\mathcal{H}(id_{ij}):(i,j))$.
	\item Owner updates a Merkle tree by inserting $K$ items: \newline
		$\mathsf{IDTree.Insert}(\mathcal{H(H(}id_{ij} || 1)|| i || j),\ldots,\mathcal{H(H(}id_{ij} || K)|| i || j))$.
	\squishlistfourend
	\item Owner invokes $\mathcal{C}.\mathsf{Store(\textrm{``$rt$''},\ast)}$ where $\ast$ as $\mathsf{IDTree.root()}$.
	\item Owner invokes $\mathcal{C}.\mathsf{Deposit()}$ with $v$ coins for each Licensee.
\squishlisttwoend
\squishlistend
\\
\\
\hspace{-8pt} $\mathsf{ShareData}$\\
\squishlist
   	\item INPUTS:
	\squishlistthree
	\item public parameters $\vec{P}$ and $a_{s}$.
	\item secret value $s$ of Owner.
	\item Licencee$_i$'s private key $sk_i^L$ and Owner's private key $sk^O$.
	\item $N$ versions of watermarked data $\overrightarrow{D^w_{i}}=\set{D^w_{i1},\ldots, D^w_{iN}}$.
	\squishlistthreeend
	\item OUTPUTS: 
	\squishlistthree
	\item secret value $r_i$ and $l_i$ of Licensee$_i$.
	\item $\sig_\ast(R_i)$ signed and kept by Owner and Licensee$_i$.
	\item A watermarked data $D^w_{il_{i}} \in \overrightarrow{D^w_{i}}$ sent to Licensee$_i$.
	\squishlistthreeend
\squishlisttwo
	\item Licensee$_i$ randomly samples $r_i \gets \bin^\lambda$ and $l_i \in [N]$.
	\item $  \sig_\ast(R_i) \gets \mathsf{OT.GenerateEvidence}(\vec{P}, r_i, l_i, sk_i^L, sk^O)$ \newline where $\sig_\ast(R_i)$ denotes $R_i$ signed by Owner and Licensee$_i$.
	\item 
	$D^w_{il_{i}} \gets \mathsf{OT.TransferData}(\vec{P}, a_s, s, \sig_{\ast}(R_{i}), r_i, l_i, \overrightarrow{D^w_{i}})$. 
\squishlisttwoend
\squishlistend
\\
\\
\hspace{-8pt} $\mathsf{ReportPiracy}$
\squishlist
   	\item INPUTS:  
	\squishlistthree
	\item a pirated copy $D^w_{xy}$.
	\item Informer$_i$'s address $pk_i^I$.
	\item the current period number $T$.
	\squishlistthreeend
	\item OUTPUTS: bit $b$ equals $\true$ if $\mathcal{C}.\mathsf{VerifyReport()}=\true$.
\squishlisttwo
	\item Informer$_i$ detects watermark in $D^w_{xy}$ and extracts secret ID: \newline
	$ id_{xy} \gets D_w(D^w_{xy})$.
	\item Informer$_i$ queries Owner for following information: 
	\squishlistfour
	\item  $(x,y)\gets\mathsf{IDMap.Query}(\mathcal{H}(id_{xy}))$.
	\item  $\mathsf{Path}_{xyT} \gets \mathsf{IDTree.Query(x,y,\textrm{\em T})}$.
	\squishlistfourend 
	\item Informer$_i$ invokes  $\mathcal{C}.\mathsf{Store(\textrm{``$cm$''},\ast)}$ where $\ast$ is a tuple \newline 
	$(cm_1,cm_2,cm_3)=(\mathcal{H(H(}id_{xy} || T)|| pk_i^I),x,y)$.
	\item In period $T+1$, Informer$_i$ invokes $\mathcal{C}.\mathsf{VerifyReport()}$ with \newline
	a tuple $(rv_1,rv_2,rv_3)=(\mathcal{H(}id_{xy}||T), \mathsf{Path}_{xyT}, pk_i^I)$.
\squishlisttwoend
\squishlistend
\\
\\
\hspace{-8pt} Omitted protocols$^*$: $\mathsf{Appeal}$, $\mathsf{ClaimBounty}$, $\mathsf{ConfirmInfringer}$
}
\pchspace[0cm]
\procedure[syntaxhighlight=false,mode=text,width=0.44\textwidth,codesize=\small]{}{
\vspace{4pt}
\hspace{-8pt} $\mathcal{C}.\mathsf{Store()}$
\squishlist
   	\item INPUTS: data type $tp$, data $dat$.
	\item OUTPUTS: bit $b$ equals $\true$ if invocation succeeds. 
\squishlistend
\squishlistfive
	\item If $tp=\textrm{``$p$''}$, $\mathcal{C}.\mathsf{PList.Append(dat)}$.
	\item If $tp=\textrm{``$rt$''}$, $\mathcal{C}.\mathsf{rt}=\mathsf{dat}$.
	\item If $tp=\textrm{``$cm$''}$, $\mathcal{C}.\mathsf{CMList_\textrm{\em T}.Append(dat)}$ ($T=\mathcal{C}.\mathsf{Time()}$).
\squishlistfiveend
\\
\\
\hspace{-8pt} $\mathcal{C}.\mathsf{VerifyReport()}$
\squishlist
   	\item INPUTS: $rv_1,rv_2,rv_3$
	\item OUTPUTS: bit $b$.
\squishlistend
\squishlistfive
	\item $T=\mathcal{C}.\mathsf{Time()}$.
	\item Output $b=\true$ if all following conditions are $\true$:
	\squishlistfour 
	 \item If $rv_2$ is a valid Merkle tree path with root $\mathcal{C}.\mathsf{rt}$. 
	\item  If $\mathcal{H}(rv_1||rv_3)=cm_1$ where $cm_1$ is the first element \newline 
	in a tuple $(cm_1, x, y) \in\mathcal{C}.\mathsf{CMList_{T-1}}$.
	\item If $rv_2$ contains $\mathcal{H(H}(rv_1)||x||y)$.
	\squishlistfourend
	\item If $b=\true$:
	\squishlistfour 
	\item If $\mathcal{C}.\mathsf{Status}_x=$NORMAL: 
	\squishlistsix
	\item $\mathcal{C}.\mathsf{ReportTime_{x}} \gets T-1$.  
	\item $\mathcal{C}.\mathsf{Version}_x \gets y$. 
	\item $\mathcal{C}.\mathsf{Status}_x \gets $ACCUSED.
	\squishlistsixend
	\item $\mathcal{C}.\mathsf{ReportNumber_{x}}\gets \mathcal{C}.\mathsf{ReportNumber_{x}}+1$.
	\item $\mathcal{C}.\mathsf{IsInformer}_{x}[rv_3]=\true$.
	\item $\mathcal{C}.\mathsf{SendBounty}(rv_3$, $\mathcal{B_\mathrm{1}(C}.\mathsf{ReportNumber_{x}},v)$).
	\squishlistfourend
\squishlistfiveend
\\
\\
\hspace{-8pt} $\mathcal{C}.\mathsf{VerifyAppeal()}$
\squishlist
   	\item INPUTS: $r_x$, $l_x$, $\sig_\ast(R_{x})$
	\item OUTPUTS: bit $b$.
\squishlistend
\squishlistfive
	\item Output $b=\true$ if all following conditions are $\true$:
	\squishlistfour 
	\item  If $\sig_\ast(R_{x})$ is signed by Owner and Licensee$_x$.
	\item  If $\mathcal{C}.\mathsf{PList}[l_x]-\pgen(r_x)=R_x$.
	\item If $\mathcal{C}.\mathsf{Status}_x = $ ACCUSED and $l_x \neq \mathcal{C}.\mathsf{Version}_x$.
	\item  If $\mathcal{C}.\mathsf{Time()}-\mathcal{C}.\mathsf{ReportTime_{x}} \le Timeout$.
	\squishlistfourend
	\item If $b=\true$: $\mathcal{C}.\mathsf{Status}_x \gets $EXONERATED.
\squishlistfiveend
\\
\\
\hspace{-8pt} $\mathcal{C}.\mathsf{AllocateBounty()}$
\squishlist
   	\item INPUTS: Informer's address $pk^I_i$ and Infringer's index $x$.
	\item OUTPUTS: bit $b$.
\squishlistend
\squishlistfive
	\item Output $b=\true$ if both following conditions are $\true$:
	\squishlistfour 
	\item  If $\mathcal{C}.\mathsf{IsInformer}_{x}[pk^I_i] = \true$.
	\item  If $\mathcal{C}.\mathsf{Time()} \ge K$.
	\squishlistfourend
	\item If $b=\true$: 
	\squishlistfour
	\item $\mathcal{C}.\mathsf{SendBounty}(pk^I_i$, $\mathcal{B_\mathrm{2}(C}.\mathsf{ReportNumber_{x}},v)$).
	\item $\mathcal{C}.\mathsf{IsInformer}_{x}[pk^I_i]=\false$.
	\squishlistfourend
\squishlistfiveend
\\
\\
\hspace{-8pt} $\mathcal{C}.\mathsf{SetGuilty()}$
\squishlist
   	\item INPUTS: Infringer's index $x$.
	\item OUTPUTS: bit $b$.
\squishlistend
\squishlistfive
	\item Output $b=\true$ if both following conditions are $\true$:
	\squishlistfour 
	\item  If $\mathcal{C}.\mathsf{Status}_x = $ ACCUSED.
	\item  If $\mathcal{C}.\mathsf{Time()}-\mathcal{C}.\mathsf{ReportTime_{x}} > Timeout$.
	\squishlistfourend
	\item If $b=\true$: $\mathcal{C}.\mathsf{Status}_x = $ GUILTY.
\squishlistfiveend
\\
\\
$^*$\small{For simplicity, protocols for $\mathsf{Appeal}$, $\mathsf{ClaimBounty}$ and $\mathsf{ConfirmInfringer}$ are omitted in the figure which invoke $\mathsf{VerifyAppeal()}$, $\mathsf{AllocateBounty()}$ and $\mathsf{SetGuilty()}$, respectively. These protocols can be directly inferred from corresponding contract functions.}
} 
}
\end{center}
\vspace{-10pt}
\caption{Construction of an anti-piracy scheme $(\mathsf{Initiate, ShareData, ReportPiracy, Appeal, ClaimBounty, ConfirmInfringer})$ with Argus contract $\mathcal{C}=(\mathsf{Store(), VerifyReport(), VerifyAppeal(), AllocateBounty(), SetGuilty()})$.}
\label{fig:construct}
\end{figure*}
}

\subsubsection{Cryptographic building blocks} \label{sec:buildblock}
Let $\lambda$ denote the security parameter and $q$ as a large prime number. The building blocks used in our construction are as follows:
\squishlistoneone
\item{\bf Pseudorandom Number Generator} is a function that outputs a random $\lambda$-bit string $\bin^{\lambda}$.
\item{\bf ECC Point Mapping $\pgen$} is a function that maps a field number over $\ZZ_q$ via the multiplication of an ECC generator $G$ to a point belongs to group $\mathbb{G}$. 
\item{\bf Collision Resistant Hashing $\mathcal{H}$} is a function that $\mathcal{H}:\bin^* \to \bin^{O(\lambda)}$ which maps an arbitrary length of bit string to a $O(\lambda)$-bit random string. $\mathcal{H}$ satisfies the properties of {\em collision-resistance} and {\em irreversibility}.
\item{\bf Merkle Tree $\mathcal{MT}$} is a tree-based data structure where every father node $Node_i$ stores the value of the hash value of string catenation of $Node_i$'s two child nodes $Node_{2i}$ and $Node_{2i+1}$: $\mathcal{H}(V_{2i} || V_{2i+1}) \to V_{i}$. With {\em collision-resistance} of $\mathcal{H}$, any changes in leaf nodes leads to a unique $\mathcal{MT}$ tree root.
\item{\bf Oblivious Transfer $\mathsf{OT}$} is a function described in Figure \ref{fig:ot} which mainly comprises two sub protocols: $\mathsf{OT.GenerateEvidence}$ and $\mathsf{OT.TransferData}$.  $\mathsf{OT.GenerateEvidence}$ can generate the evidence of licensee's choice which is later used in $\mathsf{OT.TransferData}$. In addition, the evidence functions in the phase of licensee's potential appeal. And for appeal, we leverage $O(1)$-appeal in Section \ref{sec:detect}.
\item{\bf Reward Function $\mathcal{B}_1$} and $\mathcal{B}_2$ are two functions described in Section \ref{sec:quantify} which can incentivize informers' good-faith reports with some other advantageous properties. $\mathcal{B}_1$ is the value of reward allocated to informers immediately which only relates the order of informers' submission while $\mathcal{B}_2$ is the value of reward after deadline which only relates to the total number of submissions.
\item {\bf Watermarking Functions} $E_w$ and $D_w$ are watermark embedding and detection functions, respectively. Via $E_w$, owner can add a piece of information into certain types of data imperceptibly; and via $D_w$, anyone can robustly detect the information embedded into the watermarked data unless unacceptable distortion is applied to the data. 
\squishlistoneoneend

\subsubsection{A detailed description of Argus system} \label{sec:view}
Figure \ref{fig:construct} gives a detailed description about the Argus system, 
which consists of six protocols: $\mathsf{Initiate}$, $\mathsf{ShareData}$, $ \mathsf{ReportPiracy}$, $ \mathsf{Appeal}$, $\mathsf{ClaimBounty}$ and $\mathsf{ConfirmInfringer}$. These protocols execute between \em{client programs} representing different roles and the \em{contract code} running on the public blockchain. For simplicity, we omit the address-registration protocol, in which the owner and the licensees associate their identities with key pair $(pk,sk)$\footnote{The linkability can be only visible to the authority for future law enforcement. In addition, we do not require informers to register their identities.}. The contract $\mathcal{C}$, shown in the figure, includes five functions: $\mathsf{Store()}$, $\mathsf{ VerifyReport()}$, $\mathsf{ VerifyAppeal()}$, $\mathsf{AllocateBounty()}$ and $\mathsf{SetGuilty()}$\footnote{Protocols in Figure \ref{fig:construct} also invoke $\mathcal{C}.\mathsf{Deposit()}$ and $\mathcal{C}.\mathsf{SendBounty()}$, which can be constructed in a straightforward way thus omitted in the figure.}.

$\mathsf{Initiate}$ establishes the basis for the other protocols and contract functions. In $\mathsf{Initiate}$, parameters of OT (Step 1 to 3), watermarked data (Step 4.a, 4.b) and targets/bounty for piracy report (Step 5.c, 5.d, 6, 7) are prepared. Note that in Step 5.d, the mechanism of information-hiding report (Section \ref{sec:zkw}) is involved. A Merkle tree $rt$ is introduced to reduce the on-chain storage overhead of OTEvidence, which is proportional to $O(N*M)$. Similarly, $\vec{P}$ can be also stored on-chain via the Merkle tree proof to avoid the overhead of $O(N)$.

$\mathsf{ShareData}$ is the protocol for licensees to obtain data. As described in Section \ref{sec:detect}, the core building block within the protocol is  OT adapted for $O(1)$-Appeal. 

$\mathsf{ReportPiracy}$ is the protocol for the open population to participate in the anti-piracy solution. Thus, we should make this process as cost-efficient and incentive-compatible as possible. The core of this process is the multi-period commitment scheme (Step 3 and 4), which is described in Section \ref{sec:zkw}. In addition, the informer is supposed to fetch some assistant data  (i.e. the Merkle tree path) to generate a transaction to report piracy via $\mathcal{C}.\mathsf{VerifyReport()}$.

$\mathsf{Appeal}$ can be regarded as the subsequent protocol of $\mathsf{ShareData}$ in case that the innocent Licensee is incriminated by malicious owner in $\mathsf{ReportPiracy}$. From Section \ref{sec:detect}, we can see that only $O(1)$ messages are sent to the contract to invoke $\mathcal{C}.\mathsf{VerifyAppeal()}$ before timeout.
 
 $\mathsf{ClaimBounty}$ is the protocol for informers to claim the rest of bounty (i.e. $\mathcal{B}_2(n)$ in Section \ref{sec:quantify}) after collection period ends via $\mathsf{AllocateBounty()}$. Note that, the first part of bounty $\mathcal{B}_1(i)$ has allocated upon informers' reveal phase ($\mathcal{C}.\mathsf{VerifyReport()}$).

 $\mathsf{ConfirmInfringer}$ is the protocol for owner to set the status of accused licensee into ``GUILTY'' if the licensee does not successfully appeal during timeout.

Corresponding contract functions are also detailed in Figure \ref{fig:construct} and referred to in aforementioned descriptions about protocols.

We further optimize the performance, gas consumption, storage of Argus design, which is introduced in Appendix \ref{sec:imple}. 


 \subsection{Implementation details} \label{sec:imple}

Argus' implementation\footnote{We will publish the source code of our implementation in the near future.} spans over three technology areas: watermarking (embedding, detection), cryptography (OT, PIR, etc.), contract (contract code and client script). For watermarking, we integrate existing algorithms of three data types (i.e., image, audio, software) currently. Other data types could be added in the future. For cryptography and contract, we implement and optimize the designs using the insights and ideas described in Appendix \ref{sec:contract} and Appendix \ref{sec:crypto}. 

Instantiation is about the setup of parameters and algorithms described in Appendix \ref{sec:bigpic}, which directly impacts the security analysis of Argus in Section \ref{sec:eval}. We will provide the parameter and the algorithm utilized in this work especially in Appendix \ref{sec:crypto}.

\subsubsection{Digital Watermarking}\label{sec:wmimple}
We leverage Spread Spectrum (SS) based watermark for the images \cite{image} and audio \cite{audio}; Control flow graph (CFG) watermark for the software \cite{software}. As described in Section \ref{sec:model}, watermark-related attack \cite{sensitivity, bibd} is not considered in this work. Corresponding countermeasures are orthogonal to this work and thus can be further applied to increase the security of our system. The setup of watermarking parameters are listed in Table \ref{tab:wm}. In addition, we also list the performance and effect of watermark embedding/detection. 

\vspace{5pt}
\noindent {\bf Further Optimization.} Intuitively, the watermark embedding may take considerable time for the owner to initiate if the $N$ is large with an overhead of $O(N*M)$. In fact, this overhead can be mitigated with segment-and-watermark technology \cite{ot2,ot1}: to split the data into $L$ multiple segments and randomly embed each segment with an alternative watermark from $\set{w_1,w_2}$. Then, by randomly combining the segments, it is easy to obtain $2^L$ versions of watermarked data. The whole process only costs twice of watermark embedding time. In addition, since the watermark embedding time is offline and setup only once, the overhead of this part is totally acceptable.

\begin{table}[thbp]
\centering
\footnotesize
\caption{The details of watermarking schemes}
\vspace{-5pt}
\setlength{\tabcolsep}{3pt}
 \renewcommand{\arraystretch}{1.1}
 \begin{threeparttable}
\begin{tabular}{|c|c|c|c|}
\hline
\multirow{2}{*}{Types}   & \multirow{2}{*}{Configuration \& Implementation}                                                                            & \multicolumn{2}{c|}{Latency} \\ \cline{3-4} 
                         &                                                                                                                         & $E_w$     & $D_w$    \\ \hline
Image  \cite{image}      & \begin{tabular}[c]{@{}c@{}}512$\times$512 image ($\sim 256$ KB), \\ 128 segments, 40dB PSNR, DCT algorithm\end{tabular} & 3.0 s         & 2.2 s        \\ \hline
Audio \cite{audio}       & \begin{tabular}[c]{@{}c@{}}44100 Hz, 1 second wav ($\sim$ 344 KB), \\ MCLT-based watermarking algorithm\end{tabular}    & 4.8 s         & 3.1 s        \\ \hline
Software \cite{software} & \begin{tabular}[c]{@{}c@{}}Geth binary ($\sim$ 20 MB), \\ Obfuscated CFG with 128-bit watermark\end{tabular}            & 1.7 s         & 2.5 s        \\ \hline
\end{tabular}
\end{threeparttable} \label{tab:wm}
\vspace{-10pt}
\end{table} 

\subsubsection{Contract} \label{sec:contract}
For the ledger, we leverage one of the most popular and commercially successful platforms, Ethereum \cite{ethereum}, to deploy and test our contract. Accordingly, we implement Argus contract with $\sim 2,400$ lines of code in Solidity $^\land0.4.23$ \cite{solidity}. It is developed in Truffle $4.1.14$ \cite{truffle} and deployed to a blockchain of $10$ nodes running 
Geth istanbul version v$1.9.3$ \cite{geth} to emulate the public Ethereum (similar to Ropsten Testnet\cite{testnet}).   

\vspace{5pt}
\noindent {\bf Further Optimization.} We further optimize $\mathsf{VerifyReport}$, which is one of the most expensive functions of contract $\mathcal{C}$. We apply a caching strategy, of which the main idea is simple: every time every informer submits a path of Merkle tree to the contract for verification, which not only introduces considerable on-chain bandwidth overhead but also on-chain hash operations. To mitigate the overhead, we let the contract memorize the path submitted by previous informers. For the incoming informers, only the non-overlapped part of the path is submitted and checked by the contract $\mathcal{C}$. Therefore, given the temporal locality of piracy distribution, most of the paths submitted by adjacent informers are overlapped and can be omitted. Our results show that the caching strategy can reduce 31\% of the gas consumption.

\subsubsection{Cryptography} \label{sec:crypto}
For cryptographic modules, we customize our OT protocol\cite{np} based on libOTe \cite{libote}, which is one of the state-of-the-art implementations of OT atop the Miracle Library \cite{miracle}. We set $N$ as $10,000$ to achieve an sufficient security strength. For the symmetric encryption, asymmetric encryption (e.g. for generating signature) and hash algorithms, we set $\lambda$ as $128$ and leverage AES, ECCSA, SHA-256 implemented in the Miracle library, respectively. In addition, in order to be compatible with Web3 in Ethereum \cite{web3}, we adapt the ECC in Miracle to be based on BN-$254$ curve \cite{bn254}. And with the SHA-256 algorithm, we implement the Merkle tree algorithm. For multi-period commitment scheme, we set $K$ as $1000$ and the period of report collection as $180$ days, which means that an informer's report is confirmed at most in $180*24/1000=4.32$ hours.  In addition, we leverage state-of-the-art PIR, XPIR, which is highly optimized by using Ring-LWE \cite{xpir}.

\vspace{5pt}
\noindent {\bf Further Optimization.} We optimize $\mathsf{Initiate}$ to reduce storage overhead for the owner. For a straightforward design, we can have the Merkle tree as a three-layered structure: the first layer (near the tree top $rt$) has $O(M)$ leaves which corresponds to licensees; the second layer, of which the tree top is the leave of the first layer, has $O(M*N)$ leaves which corresponds to $N$ versions of OT data; and similarly, the third has $O(M*N*K)$ corresponding to the timestamps. For the owner, only the leaves of the second layers are stored locally and provided with a storage overhead of $O(N)$ per licensee. With these leaves, the informers can generate the entire Merkle tree according to the timestamp and these leaves. 


We also optimize the off-chain bandwidth overhead incurred in $\mathsf{TransferData}$. For OT protocol where $N=10,000$, compared to the intended data, $10,000$ times of data size are transferred between the owner and the licensee (Figure \ref{fig:ot}). In addition, the performance overhead of asymmetric encryption/decryption also burdens the owner/licensee. Therefore, we adapt the process of OT to mitigate the overhead of asymmetric encryption and introduce Private Information Retrieval~(PIR) \cite{xpir} to mitigate the bandwidth overhead: data are first encrypted with symmetric encryption (e.g. AES) and only keys are transferred via OT. The encrypted data are then downloaded by the licensee with PIR protocol in an offline style. With these optimizations, the overhead of bandwidth and performance can be largely reduced, which is shown in Section \ref{sec:eval}.
%

\subsection{Security Analysis of $\mathsf{Game}^{Argus}_{O, L_1, L_2,I_1, I_2}$} \label{sec:proof}

In this section, we enumerate cases that each participant (i.e. $O$, $L_1$, and $I_1$) is honest and demonstrate that the interest of the participant will not be affected even if the rest of participants collude according to their interests. The security model (i.e. interests of participants) is introduced in Section \ref{sec:analysis}.

 \begin{proposition} \label{pro:1}
 If Owner $O$ is honest, the interest of $O$ is guaranteed in $\mathsf{Game}^{Argus}_{O, L_1, L_2,I_1, I_2}$. 
 \end{proposition}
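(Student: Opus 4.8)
The plan is to split the honest owner's interest, as listed in Table~\ref{tab:interest}, into two claims and then discharge each against the worst case in which the other four parties $L_1,L_2,I_1,I_2$ all collude: (C1) every licensee that leaked a copy and gets reported ends up in the GUILTY state (``discover infringers''), and (C2) the counter $\mathcal{C}.\mathsf{ReportNumber}_x$ produced by the contract is never inflated beyond the number of genuinely leaked copies traceable to licensee $x$ (``tally the number of copies''). Two facts serve as the anchors throughout. First, because $O$ is honest it runs $\mathsf{Initiate}$ and $\mathsf{ShareData}$ faithfully; in particular it signs exactly one OTEvidence $R_i$ per licensee --- the one produced in $\mathsf{OT.GenerateEvidence}$ --- and it never discloses the secrets $id_{ij}$. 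Second, by the trusted-blockchain assumption (Section~\ref{sec:model}) the contract functions $\mathsf{VerifyReport()}$, $\mathsf{VerifyAppeal()}$, $\mathsf{SetGuilty()}$ execute exactly as specified in Figure~\ref{fig:construct}.

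For (C1) I would enumerate how the coalition could keep a guilty licensee $x$ out of GUILTY. If $x$ never enters ACCUSED there is nothing to prove; if it does and lets the timeout elapse, the honest owner's $\mathsf{ConfirmInfringer}$ call drives it to GUILTY via $\mathsf{SetGuilty()}$. So the only live threat is a spurious $\mathsf{Appeal}$. A successful appeal needs $\sig_\ast(R_x)$ carrying the honest owner's signature; since $O$ signed only the genuine $R_x$, which by Theorem~\ref{th:7} encodes the unique index $y$ that $x$ actually chose, the submitted $l_x$ is forced to equal $y=\mathcal{C}.\mathsf{Version}_x$, so the check $l_x\neq\mathcal{C}.\mathsf{Version}_x$ fails --- this is exactly the non-repudiation guarantee of Theorem~\ref{th:9}. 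Finally, since informers are financially motivated (Section~\ref{sec:model}) and a confirmed report pays $\mathcal{B}_1$ immediately, any coalition member who holds a genuine pirated copy has a strictly positive incentive to file a report, so a leaked-and-distributed copy does get reported.

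For (C2) I would rule out the three informer misbehaviours of Table~\ref{tab:interest}. \emph{Fake reports}: passing $\mathsf{VerifyReport()}$ requires a Merkle path with root $\mathcal{C}.\mathsf{rt}$ containing $\mathcal{H}(\mathcal{H}(rv_1)\|x\|y)$ together with $\mathcal{H}(rv_1\|rv_3)=cm_1$; since the leaf commitments were built from the $\lambda$-bit secrets $id_{xy}$ sampled by the owner, collision-resistance and irreversibility of $\mathcal{H}$ prevent a party lacking $id_{xy}$ from producing such an $rv_1$, and by the robust-watermarking assumption $id_{xy}$ is obtainable only from a genuine watermarked copy (which the honest owner never hands out directly). \emph{Stealing/replaying another informer's report}: the multi-period commitment scheme of Section~\ref{sec:zkw} binds a commitment both to the sub-period index (via $L[\cdot]$ and the timestamped tree leaves) and to the reporter's address (via $cm_1=\mathcal{H}(\mathcal{H}(id_{xy}\|T)\|pk^I_i)$), so a value $rv_1$ revealed in period $T{+}1$ cannot open a commitment in any later period, nor be opened under a different address without re-committing it first (which again needs $id_{xy}$). \emph{Reporting a valid copy multiple times (Sybil)}: by the Sybil-proofness of $B(I_i,n)$ (Theorem~\ref{th:expr}) the total reward of a report together with all its Sybil duplicates is strictly smaller than that of a single report, so a financially-motivated informer has no reason to inflate $\mathcal{C}.\mathsf{ReportNumber}_x$; moreover, by Theorem~\ref{th:2} the total payout per licensee telescopes to at most $a_1=c$, so the coalition also cannot drain the owner's deposit by flooding reports.

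The step I expect to be the main obstacle is (C2) under genuine collusion, because the contract mechanically counts \emph{reports}, not \emph{copies}, so the no-inflation guarantee rests on the economic argument rather than on a cryptographic impossibility. One has to argue that no redistribution of bounties inside the coalition flips the sign of the Sybil-proofness inequality --- this is precisely where the ``financial reward is the only motivation'' assumption is doing real work --- and, simultaneously, that two distinct honest-but-colluding informers who each legitimately hold the same leaked copy still correspond to at most one ``copy'' in the owner's intended accounting. Pinning down that last correspondence precisely, i.e.\ the exact sense in which $\mathcal{C}.\mathsf{ReportNumber}_x$ reflects the number of leaked copies, is the delicate bookkeeping I would spend the most care on; by contrast, the infringer-identification half (C1) is comparatively routine once Theorems~\ref{th:7} and~\ref{th:9} are in hand.
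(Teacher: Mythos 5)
Your proposal is sound and rests on the same ingredients as the paper's argument, but it is organized quite differently. The paper proves Proposition~\ref{pro:1} by enumerating the eight possible malicious subsets of $\{L_1,L_2,I_1,I_2\}$ and arguing each case informally: negligible probability $O(2^{-\lambda})$ of guessing a secret $id_{xy}$, non-repudiation of $O(1)$-Appeal, Sybil-proofness against over-submission, per-licensee independence of IDs and bounties (with collusion-resistant codes invoked against $L_1$--$L_2$ watermark collusion), and an open-population argument for the bribery coalition $(L_1,I_1)$. You instead decompose the owner's interest into infringer identification (C1) and an uninflated tally (C2) and discharge each directly against the worst-case coalition, leaning explicitly on the contract mechanics of Figure~\ref{fig:construct} and on Theorems~\ref{th:7} and~\ref{th:9} for the appeal path, and on Sybil-proofness plus the multi-period commitment for (C2). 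This buys a cleaner, attack-typed structure and makes explicit the report-versus-copy bookkeeping caveat that the paper only signals with its ``(lower bound of)'' parenthetical; the paper's case enumeration, in exchange, surfaces two scenarios you gloss over.

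The one place where your reasoning is genuinely weaker than the paper's is the suppression-by-bribery case. You claim that ``any coalition member who holds a genuine pirated copy has a strictly positive incentive to file a report,'' but under the financially-motivated-informer assumption a malicious $L_1$ can simply pay $I_1$ (and $I_2$) more than the bounty to stay silent --- this is exactly the coalition $(L_1,I_1)$ the paper considers, and its resolution is not an in-coalition incentive but the open-population argument: $L_1$ cannot bribe every potential informer, so some informer outside the coalition reports. Since your (C1) is stated conditionally on a report being filed, your headline claims survive, but the ``discover infringers'' half of the owner's interest needs that open-population step, not your incentive sentence. Secondly, the paper explicitly treats $L_1$--$L_2$ collusion to defeat watermark detection (mitigated by collusion-resistant codes \cite{tardos}); you fold this into the blanket robust-watermarking assumption, which is defensible given Section~\ref{sec:model} but should be said. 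Neither point breaks your proof, but both are where the paper's case analysis does work that your coalition-worst-case framing quietly skips.
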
 
 \begin{proof}
There are eight cases as follows except those identical:
 \begin{itemize}
  \item {\em Malicious $L_1$, Honest $I_1$.} As soon as the licensee $L_1$ redistributes the pirated copies to informers, informer $I_1$ can detect the watermark and extract the secret ID to report. Moreover, the licensee cannot invalidate the report according to $O(1)$-Appeal protocol (Section \ref{sec:detect}). Therefore, the infringer is correctly identified and the (lower bound of) number of pirated copies is indicated (unless $L_1$ over-report himself/herself for redistributing piracy, which conflicts $L_1$'s interests). 
   \item {\em Honest $L_1$, Malicious $I_1$.} If $L_1$ does not redistribute the data, the possibility for $I_1$ to guess any secret ID $id_{xy}$ is $O(2^{-\lambda})$ which is negligible. In other words, $I_1$ cannot mislead the Owner to accuse innocent $L_1$.
     \item {\em Malicious $L_1$, Malicious $I_1$.} The analysis is similar to above two cases. First, $L_1$ should be correctly identified. Second, due to the property of Sybil-proofness, $I_1$ is disincentivised to over-submit. Thus, the number of pirated copies indicated by submissions is correct.
      \item {\em Malicious coalition $(L_1, I_1)$.}  Since $L_1$ is financially motivated, $L_1$ may compensate $I_1$ with other rewards to be not accused. However, due to the nature of open population that $L_1$ cannot collude with every potential informers, another informer (e.g. $I_2$) may report to Argus contract $\mathcal{C}$, which invalidates the collusion of $L_1$ and $I_1$.
 \item {\em Malicious coalition $(L_1, L_2)$.} Since the secret ID $id_{ij}$ are unique for every licensee and the bounty for every licensee is individually configured, the anti-piracy process can be regarded as orthogonal to every licensee. Thus, the collusion has no effects on the interests of owner. The only effect is that $L_1, L_2$ may collude to launch collusion attack to bypass the watermark detection, which can be mitigated by collusion-resistant code \cite{tardos}.
  \item {\em Malicious coalition $(I_1, I_2)$.} In fact, since we do not hold any assumptions about the identities of informers, informers' collusion is naturally indicated and considered in our protocol design. In addition, similar to the case ``{\em Honest $L_1$, Malicious $I_1$.} '', informers' collusion do not affect the possibility of guessing any secret ID $id_{xy}$. One potential effect is that multiple informers collude to submit only once in order to share higher total reward. However, such collusion is difficult among open population and thus have limited impact to the number of submissions. 
   \newpage
   \item {\em Malicious coalition $(L_1, L_2, I_1)$.}  This case is identical to the combination of  case ``{\em Malicious coalition $(L_1, L_2)$}'' and case ``{\em Malicious coalition $(L_1, I_1)$.}''.
   \item {\em Malicious coalition $(L_1, L_2, I_1,I_2)$.}  This case is impractical and analyzed in case {\em Malicious coalition $(L_1, I_1)$.}.
\end{itemize}
Thus, since we set $\lambda=128$, from all cases above, we can conclude that Argus contract can correctly indicate the infringer and tally the number of pirated copies, which is the interest of Owner. In other words, Proposition \ref{pro:1} is true. 
 \end{proof}

 \begin{proposition}
 If Licensee $L_1$ is honest, the interest of $L_1$ is guaranteed in $\mathsf{Game}^{Argus}_{O, L_1, L_2,I_1, I_2}$. 
 \end{proposition}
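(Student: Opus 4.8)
The plan is to mirror the case analysis of Proposition~\ref{pro:1}, now fixing $L_1$ as the honest party while $O$, $L_2$, $I_1$ and $I_2$ may collude arbitrarily. By Table~\ref{tab:interest}, the interest of an honest licensee is to prevail in an appeal, i.e.\ never to be set to GUILTY while innocent. Since $L_1$ follows the protocol, it retains the OTRecord $(r,l)$ and the doubly-signed evidence $\sig_\ast(R)$ produced in $\mathsf{ShareData}$, monitors the contract, and invokes $\mathsf{Appeal}$ within $Timeout$ whenever its status turns ACCUSED (the trusted-blockchain assumption of Section~\ref{sec:model} rules out a malicious owner censoring that transaction, and the state-channel/fair-exchange assumption of Section~\ref{sec:detect} rules out a harmful abort during $\mathsf{ShareData}$). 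First I would observe that, under this behavior, every check in $\mathcal{C}.\mathsf{VerifyAppeal()}$ other than $l_x \neq \mathcal{C}.\mathsf{Version}_x$ — validity of $\sig_\ast(R)$, the equation $P_l - r\cdot G = R$, the ACCUSED status, and the timeout — passes automatically; hence, invoking Theorem~\ref{th:7} and Theorem~\ref{th:9} (so that $R$ pins down $l$ uniquely and cannot be forged), $L_1$'s appeal succeeds \emph{exactly} when the accused version $l_x$ differs from $L_1$'s true $l$.

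The crux is therefore to bound $\Pr[l_x = l]$ over any accusing report. A report against $L_1$ that passes $\mathcal{C}.\mathsf{VerifyReport()}$ requires knowledge of some secret ID $id_{1j}$ with $j = l_x$; because $L_1$ is honest and never redistributes, and because guessing such an ID succeeds only with probability $O(2^{-\lambda})$ by irreversibility of $\mathcal{H}$, the only realistic source is a malicious owner holding $\mathsf{IDMap}$ (possibly relaying it to a colluding $I_1$/$I_2$, or acting together with a compromised $L_2$). So the worst case reduces to a malicious owner trying to frame $L_1$. By the Obliviousness of $O(1)$-Appeal (Theorem~\ref{th:8}), the owner — even given the entire OT transcript — learns nothing about $l$ beyond its uniform prior on $[N]$; colluding informers learn nothing about $l$ either, and a corrupted $L_2$ does not help, since a watermark-collusion attack on $L_1$'s copy would need that copy, which honest $L_1$ withholds (robust-watermarking assumption). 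Consequently $\Pr[l_x = l] \le 1/N = \phi$, which with the default $N = 10000$ is at most $10^{-4}$ and can be driven arbitrarily low.

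To conclude, I would assemble the coalition cases — malicious $O$ alone; $(O,I_1)$ or $(O,I_1,I_2)$; $(O,L_2)$; $(O,L_2,I_1,I_2)$; and every case in which $O$ is honest, where no valid report against $L_1$ can be produced at all — and note that in each, $L_1$ either faces no accusation or is ACCUSED and then EXONERATED through a successful appeal, except with probability at most $\phi$ on the single framing attempt; since EXONERATED is terminal and reveals $l$, no second attempt against $L_1$ for this content is possible. This matches the $\phi$-bound promised in Section~\ref{sec:detect}, so the honest $L_1$'s interest is guaranteed. The main obstacle is precisely the malicious-owner case: the owner legitimately knows every $id_{ij}$, so $L_1$'s protection cannot rest on ID secrecy but must be reduced \emph{entirely} to obliviousness of the OT (the owner cannot learn $l$) together with non-repudiation of the signed $R$ (the licensee can always exhibit its true $l$); making the $\phi$-bound airtight against an owner who may additionally bias the $\set{P_i}$ it publishes or coordinate with $L_2$ and the informers is the delicate part.
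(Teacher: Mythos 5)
Your proof is correct and follows essentially the same route as the paper's: a coalition case analysis that reduces everything to (i) non-owner adversaries needing to guess a secret ID, succeeding only with probability $O(2^{-\lambda})$, and (ii) a (possibly colluding) malicious owner whose framing attempt succeeds with probability at most $1/N$ by OT obliviousness together with the non-repudiable signed $R$ enabling a successful appeal. Your write-up is merely more explicit than the paper's (spelling out the $\mathsf{VerifyAppeal}$ checks, Theorems \ref{th:7}--\ref{th:9}, and the one-time nature of a framing attempt), but the decomposition and the $1/N$ bound are the same.
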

  \begin{proof}
For simplicity, we only present three extreme cases:  malicious $O$ or malicious coalition $(L_2,I_1, I_2)$ or malicious coalition $(O, L_2,I_1, I_2)$ (other cases can be analyzed similarly):
\begin{itemize}
\item {\em Malicious $O$ only}. As stated in Section \ref{sec:detect}, $O$ only has a possibility of $1/N$ to successfully accuse $L_1$. In addition, the incrimination is one-time since the process will be recorded by the ledger and discredit the owner. With a large $N$, the owner should tend to not accuse $L_1$.
\item {\em Malicious coalition $(L_2,I_1, I_2)$}. Similar to the case ``{\em Honest $L_1$, Malicious $I_1$.}'', the coalition can mislead the Owner to accuse $L_1$ with a possibility of $O(2^{-\lambda})$ which is negligible. 
\item {\em Malicious coalition $(O, L_2,I_1, I_2)$}. This equals to the combination of case  ``{\em malicious $(L_2,I_1, I_2)$}'' and case ``{\em Malicious $O$ only}'', which has a maximum possibility of $1/N$ to successfully incriminate the innocent $L_1$.
\end{itemize} 
Thus, the possibility to accuse honest $L_1$ is at most $1/N$ which is negligible with a large $N$ (e.g. $N=10,000$). Thus, we can conclude that the interest of $L_1$ is guaranteed. 
 \end{proof}
 
 \begin{proposition}
 If Informer $I_1$ is honest, the interest of $I_1$ is guaranteed in $\mathsf{Game}^{Argus}_{O, L_1, L_2,I_1, I_2}$. 
 \end{proposition}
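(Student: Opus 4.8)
The plan is to follow the case-enumeration style of the two preceding propositions: fix $I_1$ honest and let the remaining four parties $O,L_1,L_2,I_2$ form an arbitrary malicious coalition, then show that $I_1$'s interest---having his one genuine report confirmed and collecting the bounty it earns, with nobody able to steal it, block it, or dilute it below the guaranteed amount---is preserved. Since $L_2,I_2$ are duplicates of $L_1,I_1$, every coalition reduces to one of: (i) malicious $O$ alone; (ii) a malicious informer coalition; (iii) the full coalition $(O,L_1,L_2,I_2)$. An honest $I_1$ by definition holds a genuine pirated copy $D^w_{xy}$, so there is some licensee $x$ at fault; the question is purely whether $I_1$ gets paid fairly.

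First I would show $I_1$'s report is always accepted. Running $\mathsf{ReportPiracy}$, $I_1$ extracts $id_{xy}$ via $D_w$, obtains the assistant data $(x,y)$ and $\mathsf{Path}_{xyT}$ from the owner (the only owner-supplied inputs; if a malicious $O$ withholds them, an honest---hence financially motivated---$I_1$ simply declines to report and loses nothing, so his interest is vacuously intact), commits $cm_1=\mathcal{H}(\mathcal{H}(id_{xy}\|T)\|pk^I_1)$ in period $T$, and reveals in $T+1$. All three checks of $\mathcal{C}.\mathsf{VerifyReport()}$ then pass, so $I_1$ is entered into $\mathsf{IsInformer}_x$ and immediately receives $\mathcal{B}_1(\mathcal{C}.\mathsf{ReportNumber}_x,v)$, while $\mathcal{B}_2$ becomes claimable by $I_1$ himself through $\mathsf{ClaimBounty}$ once $\mathcal{C}.\mathsf{Time()}\ge K$. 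None of this needs any other party: the bounty pool was pre-funded by $\mathcal{C}.\mathsf{Deposit()}$ in $\mathsf{Initiate}$ (which an honest $I_1$ verifies before reporting, consistent with the financial-motivation assumption), the two contract calls are issued by $I_1$, and the contract is trusted by assumption.

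Next I would rule out the coalition's two remaining levers: theft/replay of the report and dilution of the reward. Theft and cross-period replay are precisely what the multi-period commitment scheme of Section~\ref{sec:zkw} prevents---$cm_1$ binds the report to $pk^I_1$, so no coalition member can satisfy $\mathcal{H}(rv_1\|rv_3)=cm_1$ with $rv_3\neq pk^I_1$ unless he committed in period $T$, and the timestamp threaded through the contract list $L[\cdot]$ defeats replay into later periods; even a coalition member handed the same copy $D^w_{xy}$ can at best file an independent report that does not displace $I_1$'s. For dilution, a malicious $O$ could fabricate extra reports from his own watermarked stock to inflate $n$, but by Theorem~\ref{th:expr} (equivalently Theorem~\ref{th:5}) we have $B(I_1,n)=B_1(i)+B_2(n)$ with $B_2(n)\ge 0$ and $\lim_{n\to\infty}B_1(i)\ge c_i>0$; hence once $I_1$'s report is confirmed at index $i$ he keeps at least $c_i$ regardless of how the count grows afterward, and by Order-awareness a later-reporting coalition cannot lower his index while Sybil-proofness means it gains nothing by trying.

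The main obstacle I expect is making the ``index $i$ is itself safe'' step airtight: I must argue that adversarial transaction ordering (front-running $I_1$'s reveal, or a malicious owner sequencing $\mathcal{C}.\mathsf{VerifyReport()}$ invocations) cannot push $I_1$ to a worse index than he gets by reporting as early as he physically can---any report that legitimately precedes his corresponds to a party who actually obtained a copy earlier, and a Sybil/duplicate of $I_1$'s own copy cannot be revealed before $I_1$ reveals it. Once the index is fixed, the bound $B(I_1,n)\ge B_1(i)\ge c_i$ is immediate, so the real work is the commitment-binding and anti-replay argument together with this ordering observation; the guaranteed-amount inequality itself is just a quotation of Theorem~\ref{th:expr}.
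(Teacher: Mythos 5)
Your proposal is correct in substance and follows the same case-enumeration style, but it actually argues along a broader and somewhat different route than the paper. The paper's own proof is narrower: it invokes (in a footnote, consistent with Section \ref{sec:analysis}) the assumption that the owner will not act against informers, and then only enumerates malicious $L_1$, malicious $I_2$, and the coalition $(L_1,I_2)$; the two ``levers'' it rules out are (1) preventing $I_1$ from producing ProofOfLeakage, dispatched by the robust-watermarking assumption and the fact that $L_1$ cannot stop $I_1$ from cooperating with $O$, and (2) reducing $I_1$'s reward, dispatched purely by a conflict-of-interest argument --- over-submission by $I_2$ is against $I_2$'s own interest because of Sybil-proofness. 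You instead include a malicious $O$ in the coalition (withholding of $(x,y)$ and $\mathsf{Path}_{xyT}$, inflation of the count, transaction ordering), and you bound the damage quantitatively via $B(I_1,n)=B_1(i)+B_2(n)$ with $B_2(n)\ge 0$ and $\lim_{n\to\infty}B_1(i)\ge c_i>0$, i.e., the guaranteed-amount property, rather than relying on the attackers' rationality for the dilution step; you also make explicit the commitment-binding to $pk^I_1$ and the timestamped list $L[\cdot]$ against theft and cross-period replay, which the paper leaves implicit by reference to Section \ref{sec:zkw}. Your version buys robustness (it does not need the adversary to be financially rational for the reward-dilution argument, and it covers owner misbehavior that the paper excludes by assumption), at the cost of extra obligations the paper never takes on; one of these is slightly over-claimed: your assertion that any report preceding $I_1$'s ``corresponds to a party who actually obtained a copy earlier'' is not guaranteed under adversarial ordering within a period, though this does not break the argument, since whatever index $I_1$ lands at, the confirmed report still yields at least the guaranteed amount, which is all the proposition needs.
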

  \begin{proof}
There are two ways to affect $I_1$'s interests: (1) to prevent $I_1$ from generating ProofOfLeakage (2) to decrease $I_1$'s reward of reporting a piracy. We will show that both ways are infeasible due to conflict of interests even if $(L_1, I_2)$ are malicious:
  \begin{itemize}
  \item {\em Malicious $L_1$}. This case is similar to the case ``{\em Malicious $L_1$, Honest $I_1$}'': since $L_1$ cannot remove the watermark and prevent $I_1$ from cooperating with $O$, the first way mentioned above is infeasible. In addition, since $L_1$ does not confess crimes via reporting, the reward value of  $I_1$ is not affected. 
   \item {\em Malicious $I_2$}. This case is similar to ``{\em Malicious $L_1$, Malicious $I_1$.}''. $I_2$ can neither interrupt $I_1$ to submit report nor reduce $I_1$'s reward via over-submission which conflicts with $I_2$'s interests (due to sybil-proofness).
    \item {\em Malicious coalition $(L_1, I_2)$}. This case equals to the combination of above two cases, which is infeasible due to either technical difficulties or conflict of interests. 
  \end{itemize}
 Therefore, the interest of $I_1$ is guaranteed\footnote{As addressed in Section \ref{sec:analysis}, we assume that the owner will not be malicious against informers since owner's interest is to obtain good-faith reports.}. 
 \end{proof}

\end{document}